\newcommand{\Tr}{{\mathrm{Tr}}}
\newtheorem{Def}{Definition}
\newtheorem{thm}{Theorem}
\newtheorem{prop}[thm]{Proposition}
\newtheorem{lemma}[thm]{Lemma}
\newtheorem{cor}[thm]{Corollary}
\newtheorem{ex}{Example}
\newtheorem{cj}{Conjecture}
\begin{document}

\title{Entanglement--Breaking Indices}
\author{L. Lami}
\affiliation{Scuola Normale Superiore, I-56126 Pisa, Italy.}

\author{V. Giovannetti}
\affiliation{NEST, Scuola Normale Superiore and Istituto Nanoscienze-CNR, I-56127 Pisa, Italy.}

\begin{abstract}
We study a set of new functionals (called entanglement--breaking indices) which characterize how many local iterations of a given (local) quantum channel are needed in order to completely destroy the entanglement between the system of interest over which the transformation is defined and an external ancilla. 
The possibility of contrasting  the noisy effects introduced by the channel iterations via the action of intermediate ({\it filtering}) transformations is analyzed. We provide some examples in which our functionals can be exactly calculated.
The differences between unitary and non-unitary filtering operations are analyzed showing that, at least for systems of dimension $d$ larger than or equal to 3, the non-unitary choice is preferable (the gap between the performances of the two cases being divergent in some cases). For $d=2$ (qubit case) on the contrary no evidences of the presence of such gap is revealed: we conjecture that for this special case unitary filtering transformations are optimal.
The scenario in which more general filtering protocols are allowed is also discussed in some detail. The case of a depolarizing noise acting on a two--qubit system is exactly solved in a general case.
\end{abstract}

\maketitle

\section{Introduction} \label{sec intro}

The theory of quantum channels~\cite{HOLEVOBOOK,WolfQC} is one of the cornerstones of the growing framework of quantum information~\cite{BennettShor}. The reason for this centrality is that a quantum channel is the natural description of the general dynamics of an open quantum system. That is, every external noise acting on a quantum system can be modeled through a suitable quantum channel, which transforms states of a quantum system in other states of the same system. From a mathematical point of view, a quantum channel can be thought of as a unitary interaction with an external environment which is  discarded  (Stinespring representation), as an intrinsic operation involving only operators acting on our system (Kraus representation), or as an abstract linear, completely positive, trace--preserving superoperator (axiomatic approach).

While on  one hand the theory of quantum channels is the fundamental paradigm to model the noise acting on quantum systems, on the other hand one of the most important resources that can be stored (and subsequently deteriorated) in the same systems is the quantum entanglement~\cite{ENTAN}. The power of this genuinely new effect, which has no counterpart in the classical world, is one of the main reasons for the interest of the scientific community in quantum information. Therefore, it is important to base a classification system of the noise introduced by a quantum channel only on its local action on the entanglement of a bipartite system.

Following the guidelines of Ref.~\cite{V}, we consider here those system whose noise can be thought of as a single elementary process which is iterated step by step, as the (discretized) time goes on. A fundamental assumption, which is generally well-founded from the experimental point of view, is that the various elementary steps are completely independent with each other. With this hypothesis, the action of the noise becomes a stroboscopic Markov process and can be modeled by the $n$--fold iteration of a given quantum channel. The main goal of this paper is to develop some functionals (called \emph{entanglement--breaking indices}) which characterize \emph{how many local iterations of a given channel are needed in order to completely destroy the global entanglement}. In particular, we will consider the possibility of improving the performance of the system via the action of  {\it filtering} operations~\cite{V}, i.e. quantum channels which are introduced between two consecutive iterations of the noise with the purpose of {\it protecting} the entanglement in the system. 
Our endeavor is somehow related  to the idea of quantum subdivision capacities  recently introduced  by M\"{u}ller-Hermes,  Reeb, and Wolf  in Ref.~\cite{MULLER}. These Authors considered the possibility of improving the quantum capacity~\cite{HOLEVOBOOK} associated with the time evolution of an assigned  dynamical semigroup, by interfering with the induced noise via the action of intermediate coding/decoding operations which are applied {\it while} the noise is still tampering the system, not just {\it before/after} it has already affected the communication. 
As a matter of fact, our filtering operations can  be seen as special instances of such intermediate operations. 
Differently from~\cite{MULLER} however, in this paper we focus on single channel uses scenarios where strategies that involve parallel encodings over multiple channel uses are not permitted. Furthermore, while Ref.~\cite{MULLER} deals with maps which are infinitely divisible and focuses on the continuous time-evolution limit,  our approach applies to iterations of arbitrary (not necessarily divisible) channels which operate in a stroboscopic fashion. 
Last but not the least, the figure of merit  we analyze (i.e. the entanglement--breaking property of a given concatenation of transformations) is stronger than simply requiring that the associated 
quantum capacity is null. 

Examples of quantum channels that allow for filtering transformations which protect the breaking of entanglement induced by the noise,  have been provided in Refs.~\cite{V,Vnonmaxent} and classified under the name  of {\it amendable} channels. In the present work we clarify several technical aspects associated with the filtering process showing that it is strongly influenced by the surrounding assumptions. In particular we prove that unitary filtering is in general not optimal at least when the dimension $d$  of the system of interest is larger than or equal to 3. For $d=2$ (qubit case) however this seems not be the case and we conjecture  that for this special case unitary filtering transformations provide the best protection under the iterations of the noise.
\\

In what follows greek letters such as $\phi$ or $\psi$ will typically denote quantum channels; the operation of composition, simply denoted by the juxtaposition $\phi\psi$, indicates the channel resulting from the consecutive application of $\psi$ firstly, and of $\phi$ secondly, while we shall use the symbol $\phi^n$ to represent the $n$-fold concatenation of a given channel $\phi$. The notation $\mathbf{CPt}_d$ stands for the set of linear, completely positive, trace--preserving maps acting on states of a $d$--dimensional system. Instead, the restricted set of unitary operations will be denoted by $\mathbf{U}_d$.

In Sec.~\ref{sec EBi} we  start by formalizing the notions of entanglement--breaking indices and of filtering operations, and discuss some basics properties.
In Sec.~\ref{sec:ex} we present few examples of channels for which explicit expressions for the indices can be obtained. 
Sec.~\ref{sec U vs nU filt} and  Sec.~\ref{sect filt q} instead deal with the difference between unitary and non-unitary filtering operations. In particular in Sec.~\ref{sec U vs nU filt} we show that for systems of dimension $d$ greater than or equal to 3, there are cases in which non-unitary filters perform much better than the unitary ones. 
In  Sec.~\ref{sect filt q} we conjecture that this should not be the case for systems of dimension 2 (i.e. a qubit), providing evidences and some preliminary results.
Finally, in Sec.~\ref{SEC:LOCC} we address the case where arbitrary Local Operations and Classical Communication (LOCC), or even separable~\cite{BENNETT,ENTAN} protocols, are allowed to be used as  filtering transformations. 
Conclusions and remarks are presented in Sec.~\ref{sec conc}.

\section{Entanglement -- Breaking Indices} \label{sec EBi}

In this section we introduce the functionals to be studied through the rest of the paper. Our setup is as follows: Alice and Bob share an entangled state, but Alice's half of the global system is repeatedly affected by some noise represented by the quantum channel $\phi$. Alice and Bob are supposed to be not able to communicate with each other (neither with classical nor with quantum means). Of course, this is a significant restriction, and other more sophisticated scenarios could be considered. For example, we could allow Alice and Bob to communicate with a classical device. We will discuss some nontrivial facts about this framework in the Appendix.

Our first concern is to recall some basic facts about the entanglement--breaking channels, being these a fundamental tool in our approach.

\subsection{Entanglement--Breaking Channels} \label{subsec EB}
A particularly noisy class of quantum channel is composed of those transformation that always produce a global separable state when applied locally to a generic input state. These maps are called \emph{entanglement--breaking}, and their set will be denoted by $\mathbf{EBt}$ (possibly with the subscript $d$ if we want to specify the dimension of the system on which we are acting). It is worth noting that $\mathbf{EBt}$, just like $\mathbf{CPt}$, is a compact convex set. Moreover, it turns out from the very definition that
\begin{equation} \phi\in\mathbf{EBt}\,,\ \psi\in\mathbf{CPt}\quad\Rightarrow\quad \phi\psi,\,\psi\phi\in\mathbf{EBt}\ . \label{EB propagates}\end{equation}

The fundamental characterization theorem concerning the entanglement--breaking channels is proved in~\cite{HorodeckiShorRuskai}. It states that the following facts are equivalent:
\begin{itemize}
\item the channel $\phi$ is entanglement--breaking;
\item the \emph{Choi state} $(\phi\otimes I)(\Ket{\varepsilon}\!\!\Bra{\varepsilon})$, with $\Ket{\varepsilon}$ maximally entangled state, is \emph{separable} (i.e. it can be written as a convex combination of product states);
\item and there exists an operative \emph{measurement + re--preparation} interpretation of the form
\begin{equation} \phi(X)\ =\ \sum_i \rho_i\, \text{Tr}\, [E_i X]\ , \label{Holevo form} \end{equation}
where the $\{ \rho_i \}$ are density matrices, and the positive operators $\{ E_i \}$ satisfy the sum rule $\sum_i E_i=\mathds{1}$. This expression is called \emph{Holevo form}~\cite{Holevoform}.
\end{itemize}

Let us take a close look to the qubit case. A generic state of a two--dimensional system can be written in the \emph{Bloch sphere} representation as
\begin{equation} \rho=\frac{\mathds{1}+\vec{r}\cdot\vec{\sigma}}{2}\quad , \label{Bloch repr q} \end{equation}
where $\vec{\sigma}=(X,Y,Z)$ denotes the vector of Pauli matrices, and $|\vec{r}|\leq 1$. The pure states are exactly those states $\rho$ whose associated vector $\vec{r}$ has unit modulus. Now, the action of a quantum channel $\phi$ is completely specified by the $3\times 3$ real matrix $M$ and the $3$--dimensional vector $c$ such that
\begin{equation} \phi\left(\frac{\mathds{1}+\vec{r}\cdot\vec{\sigma}}{2}\right) \ = \ \frac{\mathds{1}+(M\vec{r}+\vec{c})\cdot\vec{\sigma}}{2}\quad . \label{Bloch repr} \end{equation}
Consequently, we will sometimes indicate the channel $\phi$ with the notation $(M,c)$. If $\Ket{\varepsilon}=(\Ket{00}+\Ket{11})/\sqrt{2}$ is the maximally entangled state of two qubits, we can write the remarkable equality
\begin{equation} 4 \Ket{\varepsilon}\!\!\Bra{\varepsilon}\ =\ \mathds{1} + \sum_{i=1}^3 \sigma_i \otimes \sigma_i^T\ =\ \sum_{\mu = 0}^3  \sigma_\mu \otimes \sigma_\mu^T\ . \label{max ent Pauli} \end{equation}
Thanks to~\eqref{max ent Pauli}, the Choi state associated to $(M,c)$ takes the form
\begin{equation} R_{(M,c)}\ =\ \frac{1}{4}\, \left(\, \mathds{1} + (\vec{c}\cdot\vec{\sigma})\otimes \mathds{1}\, +\, \sum_{i,j=1}^3 M_{ij} \, \sigma_i \otimes \sigma_j^T\, \right)\ . \label{Fano} \end{equation}
If the channel is unital (that is, $c=0$), the entanglement--breaking condition (i.e. the separability condition for $R_{(M,0)}$) becomes extremely simple~\cite{AlgoetFujiwara,RuskaiEBqubit}:
\begin{equation} (M,0)\in\mathbf{EBt}_2\quad \Leftrightarrow\quad \|M\|_1\leq 1\ . \label{EB q u} \end{equation}
Here we used the standard notation
\begin{equation} \|A\|_p\ =\ \left(\,\text{Tr}\left[(A^\dag A)^{p/2}\right]\,\right)^{1/p} \label{Schatten} \end{equation}
for the Schatten norm of index $1\leq p\leq \infty$.

\subsection{Entanglement--Breaking Indices} \label{subsec EBi}
As previously stated, the main goal of this paper is to classify the amount of noise introduced by a quantum channel only by means of the effect of its local iterations on a global bipartite entanglement. The first step in our approach is the identification of some interesting functionals (which we call \emph{indices} because they are integer--valued) defined on the set of quantum channels. We postpone our comments after the mathematical definition.

\begin{Def}[Entanglement--Breaking Indices] \label{EBi} $\\$
Let $\phi \in \mathbf{CPt}$ be a quantum channel. Define
\begin{equation} n(\phi) = \min{\{n\geq 1:\ \phi^n \in \mathbf{EBt}\}}\ , \label{n}  \end{equation}
\vspace{-4ex}
\begin{multline} \mathcal{N}_U(\phi)\ =\ \min{\{\,n\geq 1:\ \forall \ \mathcal{U}_1,\ldots,\mathcal{U}_{n-1} \in \mathbf{U},}\\
\phi\,\mathcal{U}_1\phi\ldots\phi\,\mathcal{U}_{n-1}\phi \in \mathbf{EBt}\, \}\ , \label{NU}  \end{multline}
\vspace{-4ex}
\begin{multline} \mathcal{N}(\phi)\ =\ \min{\ \{\,n\geq 1:\ \forall \ \psi_1,\ldots,\psi_{n-1} \in \mathbf{CPt},}\\
\phi\psi_1\phi\ldots\phi\psi_{n-1}\phi \in \mathbf{EBt}\, \}\ . \label{N} \end{multline}

\vspace{1ex}
For an entanglement--breaking (EB) channel all these indices are set equal to $1$ by definition. Moreover, it is implicitly understood that the minimum of an empty set and the maximum of an unlimited set should be posed equal to $+\infty$, which becomes in this way a legitimate value of the functionals defined. We call \emph{filters} the maps used between repeated applications of a channel to reduce its entanglement--breaking properties (the $\mathcal{U}$'s of~\eqref{NU} or the $\psi$'s of~\eqref{N}). Given a subset of filters $F\subseteq\mathbf{CPt}$, one can consider more generally the restricted filtered index:
\begin{multline} \mathcal{N}_F(\phi) \ =\ \min{\{\,n\geq 1:\ \forall \ \psi_1,\ldots,\psi_{n-1} \in F,}\\
\phi\psi_1\phi\ldots\phi\psi_{n-1}\phi \in \mathbf{EBt}\, \}\ . \label{N restr} \end{multline}
Obviously, equation~\eqref{N restr} reduces itself to~\eqref{NU} if $F=\mathbf{U}$, and to~\eqref{N} if $F=\mathbf{CPt}$, respectively.
\end{Def}

Several observations and explanations are necessary. These functionals represent an inverse measure of the noise introduced in the system by a given channel. The smaller is the value of the index, the more dangerous for the entanglement is the action of the channel. For example, all these indices assume the value $1$ for entanglement--breaking channels and $+\infty$ for the unitary transformations. 

Firstly, let us discuss the \emph{direct $n$--index} defined by~\eqref{n}, since it is the most intuitive one. It is nothing but the smallest number of direct, serial applications of a given channel such that the global transformation becomes entanglement--breaking. In this situation Alice plays no role against the noise. Her subsystem simply suffers it a few at a time, and there is no possibility to contrast or delay its action. This quantity already appears in~\cite{V}, though it is indicated by $n_c$ there; we adopt the shorthand $n$. 

The other functionals are \emph{filtered indices}. This means that Alice chooses to \emph{play an active role against the noise} affecting her subsystem. Her strategy is the simplest possible, consisting of the application of some filters between an action of the noisy channel and the subsequent one. A filter is nothing but a (local) quantum channel that is chosen by Alice in such a way as to preserve the entanglement with Bob as best as she can. In this context, there are mainly two possible scenarios.

\begin{itemize}
\item In~\eqref{NU}, we consider only unitary filtering maps $\mathcal{U}_i$ (allowing them to be changed from time to time).
\item In~\eqref{N} we optimize over all the possible sets of $\mathbf{CPt}$ operations implemented by Alice. In other words, we admit the possibility that non--unitary filters $\psi_i$ are used.
\end{itemize}

We stress here that the one described above is only the simplest among a rich variety of possible scenarios. Many others possibilities can be equally interesting from an experimental point of view. For a more detailed discussion about the directions in which this simple framework can be generalized, we refer the interested reader to Section~\ref{SEC:LOCC}.

\subsection{Elementary Properties}
Our first concern is the analysis of the elementary properties of these entanglement--breaking indices. Their proofs (which we omit for the sake of brevity) are directly related to the operational meaning of our functionals, as outlined in the previous section. Let us group all together in a proposition: \\

\begin{prop}[Elementary Properties] $\\$
Let $\phi\in\mathbf{CPt}$ be a quantum channel. Then the following properties hold.
\begin{description}

\item[Unitary conjugation:] If $\,\mathcal{U},\mathcal{V}\in\mathbf{U}$ are unitary evolutions, then
\begin{gather}
n\,(\mathcal{U}\phi\,\mathcal{U}^\dag) \equiv n(\phi)\ , \label{nUC}\\
\mathcal{N}_U(\mathcal{U}\phi\mathcal{V})\equiv\, \mathcal{N}_U(\phi)\ ,\quad \mathcal{N}(\mathcal{U}\phi\mathcal{V}) \equiv\, \mathcal{N}(\phi)\ . \label{NUC}
\end{gather}

\item[Composition with generic channels:] Let $\psi\in\mathbf{CPt}$ be another quantum channel. Then
\begin{equation} \mathcal{N}(\phi\psi)\ \leq\ \mathcal{N}(\phi),\ \mathcal{N}(\psi)\ . \end{equation}
Here the commas denote alternative options.

\item[Elementary inequalities:] The following elementary inequalities hold:
\begin{equation} n(\phi)\ \leq\ \mathcal{N}_U(\phi)\ \leq\ \mathcal{N}(\phi)\ . \label{elem ineq} \end{equation}
Examples of maps which exhibit a finite gap between $n(\phi)$ and $\mathcal{N}_U(\phi)$ were first given in Refs.~\cite{V,Vnonmaxent} (these maps were called {\it amendable}).

\item[Reduction to the extreme points:] Denote by $\,\mathcal{C}(F)$ the convex hull of a certain set of filters $F\subseteq\mathbf{CPt}$. Consider the extreme points $e\mathcal{C}(F)$ of the convex set obtained. Then
\begin{equation} \mathcal{N}_F(\phi) \equiv \mathcal{N}_{e\mathcal{C}(F)}(\phi)\ . \end{equation}

\end{description}
\end{prop}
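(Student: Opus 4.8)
The plan is to prove the four items of the Proposition in order, each essentially by unpacking the operational meaning of the indices and applying the propagation rule~\eqref{EB propagates} together with the characterization of $\mathbf{EBt}$.

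\emph{Unitary conjugation.} First I would observe that $(\mathcal{U}\phi\,\mathcal{U}^\dag)^n = \mathcal{U}\phi^n\mathcal{U}^\dag$, so it suffices to know that the class $\mathbf{EBt}$ is invariant under pre- and post-composition with unitaries. This is immediate from the Holevo form~\eqref{Holevo form}: if $\phi(X)=\sum_i \rho_i\Tr[E_i X]$, then $\mathcal{V}\phi\,\mathcal{W}(X)=\sum_i \mathcal{V}(\rho_i)\,\Tr[(\mathcal{W}^\dag(E_i))X]$, where $\mathcal{V}(\rho_i)$ are still states and $\mathcal{W}^\dag(E_i)$ are still positive operators summing to $\mathds{1}$ (using trace preservation / unitality of $\mathcal{W}^\dag$ on the identity). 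Hence~\eqref{nUC} follows. For~\eqref{NUC} I would note that a concatenation of the form $(\mathcal{U}\phi\mathcal{V})\,\psi_1\,(\mathcal{U}\phi\mathcal{V})\cdots\psi_{n-1}\,(\mathcal{U}\phi\mathcal{V})$ equals $\mathcal{U}\,\phi\,(\mathcal{V}\psi_1\mathcal{U})\,\phi\cdots(\mathcal{V}\psi_{n-1}\mathcal{U})\,\phi\,\mathcal{V}$; the outer $\mathcal{U},\mathcal{V}$ do not affect membership in $\mathbf{EBt}$ by the invariance just shown, and the map $\psi_i\mapsto \mathcal{V}\psi_i\mathcal{U}$ is a bijection of $\mathbf{U}$ onto itself (resp. of $\mathbf{CPt}$ onto itself), so quantifying over all filters on one side is the same as on the other. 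This gives equality of the indices.

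\emph{Composition with generic channels and elementary inequalities.} For $\mathcal{N}(\phi\psi)\le \mathcal{N}(\phi)$ I would take $n=\mathcal{N}(\phi)$ and, given arbitrary filters $\chi_1,\dots,\chi_{n-1}$, rewrite $(\phi\psi)\chi_1(\phi\psi)\cdots\chi_{n-1}(\phi\psi)=\phi\,(\psi\chi_1\phi)\cdots(\psi\chi_{n-1}\phi)\,\psi$; the bracketed maps are legitimate $\mathbf{CPt}$ filters, so the inner part $\phi(\cdots)\phi$ is in $\mathbf{EBt}$ by definition of $\mathcal{N}(\phi)$, and then~\eqref{EB propagates} absorbs the trailing $\psi$. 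The bound $\mathcal{N}(\phi\psi)\le \mathcal{N}(\psi)$ is symmetric. The chain~\eqref{elem ineq} is even easier: $n(\phi)\le\mathcal{N}_U(\phi)$ because choosing all unitary filters equal to the identity shows that if $\phi\,\mathcal{U}_1\phi\cdots\phi\in\mathbf{EBt}$ for \emph{all} unitaries then in particular $\phi^n\in\mathbf{EBt}$; and $\mathcal{N}_U(\phi)\le\mathcal{N}(\phi)$ because $\mathbf{U}\subseteq\mathbf{CPt}$, so the quantifier in~\eqref{N} is over a larger set than in~\eqref{NU}, making the condition harder to satisfy and the minimal $n$ no smaller. (Both comparisons also follow from the general monotonicity $F\subseteq F' \Rightarrow \mathcal{N}_F(\phi)\le\mathcal{N}_{F'}(\phi)$, which I would state once.)

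\emph{Reduction to the extreme points.} Here the content is that replacing the filter set $F$ by the extreme points of its convex hull does not change the filtered index. The inclusion $\mathcal{N}_{e\mathcal{C}(F)}(\phi)\le \mathcal{N}_{\mathcal{C}(F)}(\phi)$ and $\mathcal{N}_F(\phi)\le\mathcal{N}_{\mathcal{C}(F)}(\phi)$ are instances of the monotonicity above (since $F, e\mathcal{C}(F)\subseteq \mathcal{C}(F)$). The substantive direction is $\mathcal{N}_{\mathcal{C}(F)}(\phi)\le \mathcal{N}_{e\mathcal{C}(F)}(\phi)$, equivalently: if $\phi\psi_1\phi\cdots\psi_{n-1}\phi\in\mathbf{EBt}$ for every choice of $\psi_i\in e\mathcal{C}(F)$, then the same holds for every $\psi_i\in\mathcal{C}(F)$. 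The key point — and the step I expect to be the main obstacle, though it is really a convexity/compactness bookkeeping issue rather than a deep one — is that the map $(\psi_1,\dots,\psi_{n-1})\mapsto \phi\psi_1\phi\cdots\psi_{n-1}\phi$ is \emph{multiaffine} (affine in each $\psi_i$ separately, since channel composition is linear in each argument), and $\mathbf{EBt}$ is convex. Therefore, writing each $\psi_i\in\mathcal{C}(F)$ as a convex combination of extreme points $\psi_i=\sum_{k}\lambda^{(i)}_k e^{(i)}_k$ with $e^{(i)}_k\in e\mathcal{C}(F)$ (possible by Carath\'eodory/Krein--Milman, using that $\mathcal{C}(F)$ is a subset of the compact convex set $\mathbf{CPt}$), one expands $\phi\psi_1\phi\cdots\psi_{n-1}\phi$ as a convex combination over multi-indices $(k_1,\dots,k_{n-1})$ of the terms $\phi\, e^{(1)}_{k_1}\phi\cdots e^{(n-1)}_{k_{n-1}}\phi$, each of which lies in $\mathbf{EBt}$ by hypothesis; convexity of $\mathbf{EBt}$ then puts the combination in $\mathbf{EBt}$ as well. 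This establishes $\mathcal{N}_{\mathcal{C}(F)}\equiv\mathcal{N}_{e\mathcal{C}(F)}$, and combining with the trivial inequalities gives $\mathcal{N}_F(\phi)\equiv\mathcal{N}_{e\mathcal{C}(F)}(\phi)$. One should be mildly careful that $e\mathcal{C}(F)\subseteq F$ need not hold (extreme points of the hull need not be original members if $F$ is not closed), but since $\mathcal{N}$ only depends on $F$ through $\mathcal{C}(F)$ via the above, and $e\mathcal{C}(F)\subseteq\overline{F}$ anyway, the statement as written goes through; I would remark on this rather than belabor it.
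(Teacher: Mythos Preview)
Your proof is correct and complete. The paper itself does not give a proof of this proposition at all; it simply states that the proofs ``are directly related to the operational meaning of our functionals'' and omits them for brevity. So there is no approach to compare against --- you have supplied what the authors left to the reader, and your arguments (invariance of $\mathbf{EBt}$ under unitary sandwiching, monotonicity $F\subseteq F'\Rightarrow\mathcal{N}_F\le\mathcal{N}_{F'}$, multiaffinity of the filtered composition together with convexity of $\mathbf{EBt}$) are exactly the intended ones.

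Two small clean-ups. First, in the composition item your parenthesisation $\phi\,(\psi\chi_1\phi)\cdots(\psi\chi_{n-1}\phi)\,\psi$ together with the phrase ``the bracketed maps are legitimate $\mathbf{CPt}$ filters'' is momentarily confusing, since the brackets contain a trailing $\phi$; the cleaner way to display the point is $\phi\,(\psi\chi_1)\,\phi\,(\psi\chi_2)\,\phi\cdots\phi\,(\psi\chi_{n-1})\,\phi\cdot\psi$, with filters $\psi\chi_i\in\mathbf{CPt}$ and the final $\psi$ absorbed via~\eqref{EB propagates}. Second, your caveat that $e\mathcal{C}(F)\subseteq F$ ``need not hold'' is actually unnecessary in this finite-dimensional setting: any extreme point of the (algebraic) convex hull $\mathcal{C}(F)$ must already belong to $F$, since it is by definition a finite convex combination of elements of $F$ and extremality forces that combination to be trivial. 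This only simplifies your argument; nothing is lost.
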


Now, let us analyze some less trivial properties of our indices. Recall that every quantum channel $\psi$ (in particular, the filters involved in~\eqref{N}) admits a Stinespring representation. In other words, $\psi$ can be seen as the (non-unitary) restriction of a global unitary evolution in a greater system. We can exploit this physical property in order to reduce the set of filters to only the unitary ones. However, this is done at the price of expanding the dimension of the system. In the following, suppose that our system has dimension $d$. Consider another ``environment'' $E$ of dimension $d^2$, and denote by $\Ket{0}\in\mathcal{H}_E$ a pure state of $E$. The associated \emph{completely depolarizing channel} $D_0\in\mathbf{EBt}_{d^2}$ acts by definition as
\begin{equation} D_0 (X)\equiv\Ket{0}\!\!\Bra{0}\ \Tr{X}\ . \label{D0 env}\end{equation}

With these preliminary discussion, we can prove the following theorem. \\

\begin{thm}[Stinespring Dilation of Filtered Indices] \label{St dil} $\\$
Let $\phi\in\mathbf{CPt}_d$ be a quantum channel. With the notation of~\eqref{D0 env}, one has
\begin{equation} \mathcal{N}\,(\phi)\ =\ \mathcal{N}_U\,(\phi\otimes D_0)\ , \label{St dil N}\end{equation} 
\end{thm}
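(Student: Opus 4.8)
The plan is to use Stinespring dilation to trade arbitrary $\mathbf{CPt}_d$ filters on the system $S$ for unitary filters on the enlarged system $S\otimes E$ with $\dim\mathcal{H}_E=d^2$, exploiting the fact that the reset channel $D_0$ discards and re-initialises the ancilla $E$ to $\Ket{0}\!\!\Bra{0}_E$ between consecutive applications of $\phi$. Introduce the preparation map $\iota(Z):=Z\otimes\Ket{0}\!\!\Bra{0}_E$ and note that $\phi\otimes D_0=\iota\circ\phi\circ\Tr_E$. Then, for any unitaries $\mathcal{U}_1,\ldots,\mathcal{U}_{n-1}\in\mathbf{U}_{d^3}$ (viewed as conjugation channels), the intermediate blocks telescope via $\Tr_E\circ\mathcal{U}_i\circ\iota=\psi_i$, where $\psi_i(\,\cdot\,):=\Tr_E[\mathcal{U}_i(\,\cdot\otimes\Ket{0}\!\!\Bra{0}_E)\mathcal{U}_i^\dagger]\in\mathbf{CPt}_d$, so that
\begin{multline}
(\phi\otimes D_0)\,\mathcal{U}_1\,(\phi\otimes D_0)\cdots\mathcal{U}_{n-1}\,(\phi\otimes D_0)\\
=\ \iota\circ\big(\phi\,\psi_1\,\phi\,\psi_2\cdots\phi\,\psi_{n-1}\,\phi\big)\circ\Tr_E\ .
\end{multline}
(For $n=1$ this reduces to $\phi\otimes D_0=\iota\circ\phi\circ\Tr_E$.)

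The second ingredient is a lemma: for every $\Xi\in\mathbf{CPt}_d$, the dressed channel $\iota\circ\Xi\circ\Tr_E$ on $S\otimes E$ is entanglement--breaking if and only if $\Xi$ is. For the ``if'' direction one pushes the Holevo form~\eqref{Holevo form} $\Xi(Y)=\sum_i\rho_i\Tr[E_iY]$ through the dressing to get $(\iota\circ\Xi\circ\Tr_E)(X)=\sum_i(\rho_i\otimes\Ket{0}\!\!\Bra{0}_E)\,\Tr[(E_i\otimes\mathds{1}_E)X]$, again a Holevo form (the POVM $\{E_i\otimes\mathds{1}_E\}$ resolves $\mathds{1}_{SE}$). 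For the ``only if'' direction one uses $\Xi=\Tr_E\circ(\iota\circ\Xi\circ\Tr_E)\circ\iota$ together with the fact that pre/post--composing a channel in Holevo form with arbitrary channels preserves the Holevo form: post-composition maps the density matrices to density matrices, and pre-composition with a channel $A$ replaces each $E_i$ by $A^*(E_i)$, which still sums to the identity because $A$ trace--preserving makes $A^*$ unital. (Equivalently, in the notation of~\eqref{Fano} the Choi state of $\iota\circ\Xi\circ\Tr_E$ is $R_\Xi\otimes\Ket{0}\!\!\Bra{0}_E\otimes(\mathds{1}_{E'}/d^2)$, which is separable across the cut $(SE)|(S'E')$ precisely when $R_\Xi$ is separable across $S|S'$, since the $E,E'$ factors form a fixed product state that may be adjoined to either side.)

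Combining the two ingredients, for each fixed $n\geq 1$ the condition ``$\phi\psi_1\phi\cdots\phi\psi_{n-1}\phi\in\mathbf{EBt}$ for all $\psi_1,\ldots,\psi_{n-1}\in\mathbf{CPt}_d$'' is equivalent to ``$(\phi\otimes D_0)\mathcal{U}_1(\phi\otimes D_0)\cdots\mathcal{U}_{n-1}(\phi\otimes D_0)\in\mathbf{EBt}$ for all $\mathcal{U}_1,\ldots,\mathcal{U}_{n-1}\in\mathbf{U}_{d^3}$'': in one direction each $\mathcal{U}_i$ induces the admissible filter $\psi_i$ defined above, and in the other direction every $\psi\in\mathbf{CPt}_d$ arises as some such $\psi_i$, because a Kraus decomposition of $\psi$ with at most $d^2$ operators gives an isometry $V:\mathcal{H}_S\to\mathcal{H}_S\otimes\mathcal{H}_E$, $V\Ket{\varphi}=\sum_{i=1}^{d^2}K_i\Ket{\varphi}\otimes\Ket{i}$, which extends to a unitary $\mathcal{U}$ on $\mathcal{H}_S\otimes\mathcal{H}_E$. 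Taking the least admissible $n$ on each side then yields $\mathcal{N}(\phi)=\mathcal{N}_U(\phi\otimes D_0)$, where the value $+\infty$ is inherited automatically when no such $n$ exists, and the $n=1$ base case is just the lemma, $\phi\in\mathbf{EBt}\Leftrightarrow\phi\otimes D_0\in\mathbf{EBt}$.

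The one genuinely delicate point is this dimension count: $\dim\mathcal{H}_E=d^2$ is exactly what guarantees that \emph{every} channel on $\mathcal{H}_S$ can be realised as $\Tr_E[\mathcal{U}(\,\cdot\otimes\Ket{0}\!\!\Bra{0}_E)\mathcal{U}^\dagger]$ with $\mathcal{U}$ unitary, which is precisely why the theorem uses $D_0\in\mathbf{EBt}_{d^2}$; a strictly smaller ancilla would only give the inequality $\mathcal{N}_U(\phi\otimes D_0)\leq\mathcal{N}(\phi)$. Everything else --- the telescoping identity and the Holevo-form (or Choi-state) bookkeeping in the lemma --- is routine once one recognises that $D_0$'s only role is to throw the ancilla away and reset it between noise steps.
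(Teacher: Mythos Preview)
Your proof is correct and follows essentially the same approach as the paper's: both establish the identity $(\phi\otimes D_0)\,\mathcal{U}_1\cdots\mathcal{U}_{n-1}\,(\phi\otimes D_0)=(\phi\psi_1\cdots\psi_{n-1}\phi)\otimes D_0$ via Stinespring, use the surjectivity of $\mathcal{U}\mapsto\psi$ afforded by $\dim\mathcal{H}_E=d^2$, and invoke the equivalence $\eta\otimes D_0\in\mathbf{EBt}\Leftrightarrow\eta\in\mathbf{EBt}$. Your presentation is merely more explicit---the factorisation $\phi\otimes D_0=\iota\circ\phi\circ\Tr_E$ makes the telescoping transparent, and you actually prove the EB-equivalence lemma the paper declares ``immediately verified''---but the substance is identical. (One notational quibble: in the definition of $\psi_i$ you write $\mathcal{U}_i(\cdot)\mathcal{U}_i^\dagger$ after already declaring $\mathcal{U}_i$ to be the conjugation \emph{channel}; it should read either $\Tr_E[\mathcal{U}_i(\cdot\otimes\Ket{0}\!\!\Bra{0}_E)]$ or $\Tr_E[U_i(\cdot\otimes\Ket{0}\!\!\Bra{0}_E)U_i^\dagger]$ with $U_i$ the underlying unitary operator.)
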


\begin{proof}
Consider a filtering strategy $\phi\psi_1\phi\ldots\phi\psi_{n-1}\phi$ implemented by Alice. Take the global unitary evolutions $\mathcal{U}_i\in\mathbf{U}_{d^3}$ (acting as $\mathcal{U}_i(X)=U_i X U_i^\dag$) which represent the filters $\psi_i$ in Stinespring form:
\begin{equation*} \psi_i(X)\ =\ \text{Tr}_E\ [\ U_i\ X\otimes\Ket{0}\!\!\Bra{0}\ U_i^\dag\ ]\ \ . \end{equation*}
In the previous equation the first degree of freedom corresponds to our system, while the second one is the (fictitious) environment. We will maintain this notation in what follows. As can be easily seen, for each $n\geq 1$ we can write
\begin{multline} \phi\psi_1\phi\ldots\phi\psi_{n-1}\phi\ \otimes\ D_0\ =\\
=\ (\phi\otimes D_0)\ \mathcal{U}_1\, (\phi\otimes D_0)\,\ldots\,(\phi\otimes D_0)\ \mathcal{U}_{n-1}\, (\phi\otimes D_0)\ . \label{St dil filters} \end{multline}
Indeed, consider for example the case $n=2$ :
\begin{gather*} (\phi\otimes D_0)\,\, \mathcal{U}\, (\phi\otimes D_0)\, (X)\ =\\
=\ (\phi\otimes D_0)\ \mathcal{U}\, \left(\,\phi \left(\text{Tr}_E X \right) \otimes \Ket{0}\!\!\Bra{0}\, \right)\ =\\
=\ \phi\, \left(\, \text{Tr}_E\, [\,U \left(\, \phi \left( \text{Tr}_E X \right) \otimes \Ket{0}\!\!\Bra{0}\, \right) U^\dag\, ]\, \right)\, \otimes\, \Ket{0}\!\!\Bra{0}\ =\\
=\ \phi\, \left(\, \psi \left( \phi \left( \text{Tr}_E X \right)\right)\, \right) \otimes \Ket{0}\!\!\Bra{0}\ =\ \left(\phi\psi\phi\otimes D_0\right)\ (X)\ . \end{gather*}
Moreover, it is worth noting that to each unitary family $\{\,\mathcal{U}_i\,\}\subseteq\mathbf{U}_{d^3}$ we can associate a corresponding family $\{\psi_i\}\subseteq\mathbf{CPt}_d$ such that~\eqref{St dil filters} is satisfied. Since $D_0$ is a completely depolarizing channel (i.e. its images are all proportional to a fixed matrix), it can be immediately verified that for each $\eta\in\mathbf{CPt}_d$
\begin{equation*} \eta\otimes D_0\in\mathbf{EBt}_{d^3}\quad\Leftrightarrow\quad \eta\in\mathbf{EBt}_d\ . \end{equation*}
Therefore, we can directly prove~\eqref{St dil N}:
\begin{gather*} \mathcal{N}(\phi)\ \equiv\ \min{\ \left\{\,n\geq 1:\ \ \forall \ \psi_1,\ldots,\psi_{n-1} \in \mathbf{CPt}_d,\right.}\\
\left.\phi\psi_1\phi\ldots\phi\psi_{n-1}\phi \in \mathbf{EBt}_d\, \right\}\ =\\
=\ \min{\ \left\{\,n\geq 1:\ \ \forall \ \psi_1,\ldots,\psi_{n-1} \in \mathbf{CPt}_d,\right.}\\
\left.\phi\psi_1\phi\ldots\phi\psi_{n-1}\phi\, \otimes\, D_0 \in \mathbf{EBt}_d\, \right\}\ =\\
=\ \min{\ \left\{\,n\geq 1:\ \ \forall \ \mathcal{U}_1,\ldots,\mathcal{U}_{n-1} \in \mathbf{U}_{d^3}\, ,\right.}\\
(\phi\otimes D_0)\ \mathcal{U}_1\, (\phi\otimes D_0)\ldots(\phi\otimes D_0)\ \mathcal{U}_{n-1}\, (\phi\otimes D_0) \in\\
\left.\in \mathbf{EBt}_{d^3} \right\}\ \equiv\ \mathcal{N}_U(\phi\otimes D_0)\ . \end{gather*}
\end{proof}

\section{Examples}\label{sec:ex}
Through this section, we present a large variety of explicit, nontrivial examples of channels for which some entanglement--breaking indices can be calculated. This will help to explain the meaning of Definition~\ref{EBi}, and to become acquainted with it.

In what follows we will use extensively the Bloch sphere representation~\eqref{Bloch repr} of the qubit (i.e. $d=2$) channels. For unital qubit channels $\phi=(M,0)$, observe that~\eqref{EB q u} implies the simple equality
\begin{equation} n(\phi)\ =\ n(M,0)\ = \ \min{\,\lbrace\, n\geq 1 :\ \| M^n \|_1 \leq 1\, \rbrace}\ . \label{nfu2} \end{equation}

The first example of calculation of the direct $n$--index is presented in Ref.~\cite{V}. We report it here for the sake of completeness.

\begin{ex}[$n$--Index of Generalized Amplitude Damping Channels] \label{n GAD ex} $\\$
A fundamental physical process involving a system coupled to an environment in a thermal state is the spontaneous emission. In the case of a single qubit, this process can be described by a \emph{generalized amplitude damping} (GAD). The set of GADs is parametrized by the two real numbers $0\leq p\leq 1$ and $0\leq \gamma\leq 1$, linked to the time the interaction takes (or to its intensity) and to the temperature of the environment, respectively (see~\cite{NC}, p. 382). The action of a GAD on a given qubit state can be written as follows:
\begin{equation} GAD_{p,\gamma} \left(\begin{smallmatrix} a & b \\ b^* & c \end{smallmatrix}\right) = \left(\begin{smallmatrix} pa+\gamma (1-p) (a+c) & \sqrt{p}\, b \\ \sqrt{p}\, b^* & -pa+(1-(1-p)\gamma) (a+c) \end{smallmatrix}\right) . \label{GAD act}\end{equation}

As usual,~\eqref{Bloch repr} allows us to write the Bloch representation \mbox{$GAD_{p,\gamma}=\left( M_{p,\gamma},\, c_{p,\gamma} \right)$}, where
\begin{equation} M_{p,\gamma}\, =\, \left(\begin{smallmatrix} \sqrt{p} & 0 & 0 \\ 0 & \sqrt{p} & 0 \\ 0 & 0 & p \end{smallmatrix}\right)\ ,
\quad c_{p,\gamma}\, =\, (1-p)\,(2\gamma-1) \left(\begin{smallmatrix} 0 \\ 0 \\ 1 \end{smallmatrix}\right)\ . \label{GAD Bloch}\end{equation}

The composition rules of the GADs can be easily deduced for example by means of equation~\eqref{GAD Bloch}. It turns out that
\begin{gather} GAD_{p_1,\gamma_1}\ GAD_{p_2,\gamma_2}\ =\ GAD_{p_3,\gamma_3}\ ,\\
p_3\equiv p_1 p_2\ , \quad \gamma_3\,\equiv\,\frac{p_1(1-p_2)\gamma_2+(1-p_1)\gamma_1}{1-p_1 p_2}\ . \label{GAD comp gen} \end{gather}
In particular,
\begin{equation} GAD_{p,\gamma}^n\, \equiv\, GAD_{p^n,\gamma}\ . \label{GAD comp n} \end{equation}

Now, let us concern ourselves about the entanglement--breaking properties of the GADs. By applying the Positive Partial Transpose (PPT) condition~\cite{PeresPPT,HorodeckiPPT} to the Choi state associated to $GAD_{p,\gamma}$, the range of $p,\gamma$ which identifies an entanglement--breaking behavior can be easily deduced:
\begin{multline} GAD_{p,\gamma}\in\mathbf{EBt}_2\ \ \Longleftrightarrow\\
\Longleftrightarrow\ \ 0\, \leq\, p\, \leq\, f(\gamma)\,\equiv\, 1\,-\,\frac{2}{1+\sqrt{1+4\gamma(1-\gamma)}}\ . \label{GAD EB}\end{multline}

That said, we can easily calculate the direct $n$--index for the set of generalized amplitude damping channels. Indeed,~\eqref{GAD comp n} together with~\eqref{GAD EB} implies that
\begin{equation} n\,(GAD_{p,\gamma})\ =\ \Big\lceil\, \frac{\log f(\gamma)}{\log p} \,\Big\rceil\quad , \label{n GAD}\end{equation}
where the \emph{ceiling function} $\lceil\cdot\rceil$ is defined by
\begin{equation} \lceil x \rceil\,\equiv\,\min{\{ s\in\mathds{Z}\,:\ s\geq x \}}\ . \label{ceiling} \end{equation}
In~\eqref{n GAD}, we have supposed $p>0$; otherwise, we immediately know that $n(GAD_{0,\gamma})\equiv 1$. Moreover, observe that~\eqref{n GAD} returns $n=\infty$ as soon as $\gamma=1$ (with $p>0$). The GADs with $\gamma=1$ are often called simply \emph{amplitude damping}, and correspond to the modelization of a spontaneous emission interaction with an environment at zero temperature.

A pictorial representation of the regions of the space $p,\gamma$ identified by equation~\eqref{n GAD} can be found in \figurename~\ref{nGAD,fig} -- see also Ref.~\cite{V}.

\vspace{2ex}
\begin{figure}[ht] 
\centering
\includegraphics[height=8cm, width=8cm, keepaspectratio]{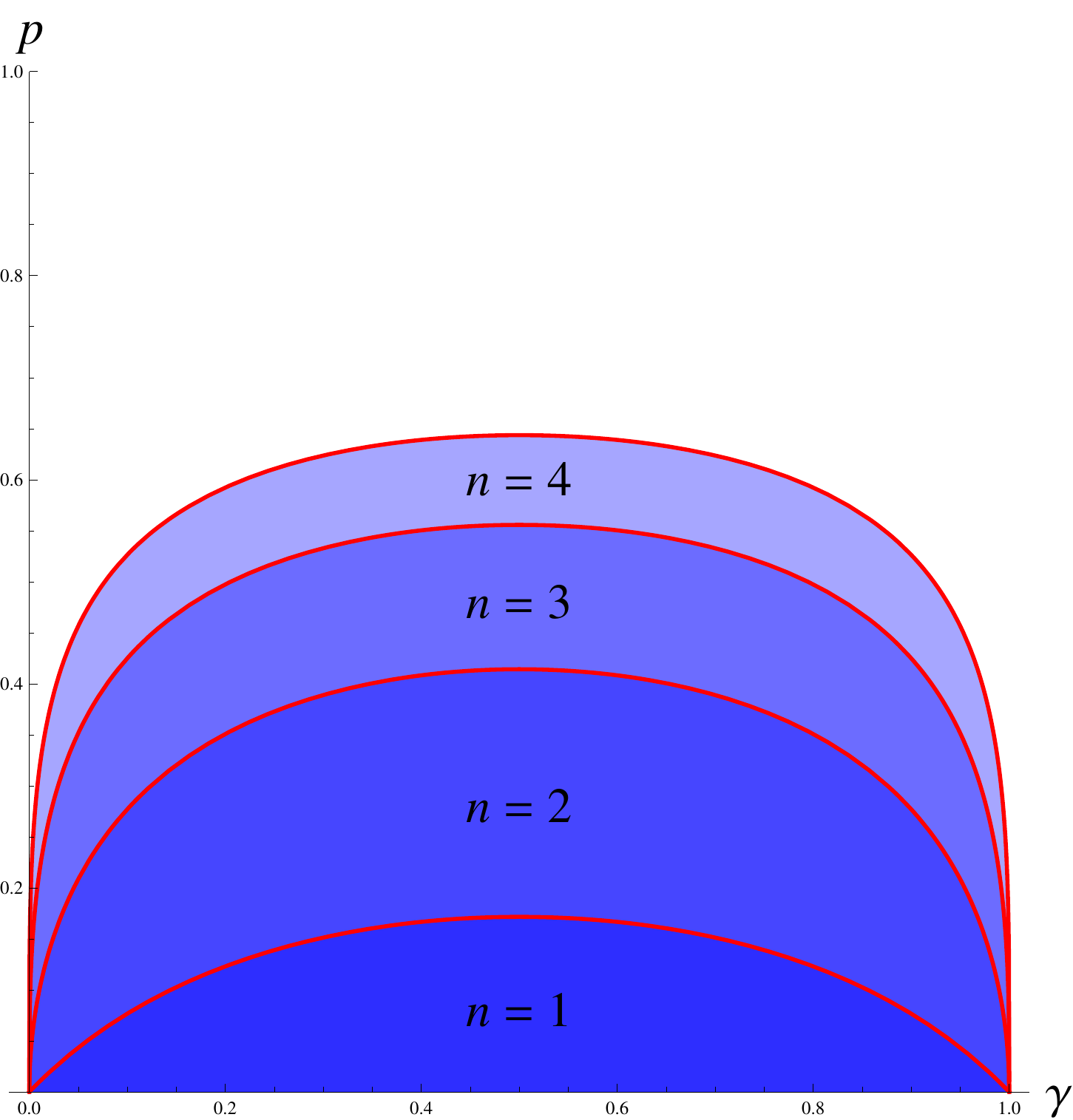}
\caption[$n$--Index for Generalized Amplitude Damping channels]{Graphic representation of the value of the direct $n$--index in the parameter space $\gamma,p$ of the GAD channels. The boundary points are always included in the adjacent region which has the lowest value of $n$.}
\label{nGAD,fig}
\end{figure}

\end{ex}

The previous example focused on the qubit case. However, there exists another famous class of channels acting \emph{in arbitrary dimension} for which the entanglement--breaking properties can be studied analytically. \\

\begin{ex}[$n$--Index of Depolarizing Channels] \label{n D ex} $\\$
The depolarizing channels are defined through a simple operative procedure on Alice's $d$--dimensional system. This procedure could be seen as the interaction with an environment, as usual, but it can be more simply visualized by involving a third human agent, named Eleonore. Eleonore takes Alice's state $\rho$ and secretly rolls a die. Depending on the outcome of the die, she gives back the system to Alice without performing any operation (with a certain probability $\lambda$), or she discards Alice's state and replaces it with the maximally mixed one~$\frac{\mathds{1}}{d}$ (with probability $1-\lambda$). In any case, Alice does not know the outcome of the die. Clearly, from her point of view, the state of the system transforms as follows:
\begin{equation*}  \rho\ \longmapsto\ \lambda\,\rho\, +\, (1-\lambda)\ \frac{\mathds{1}}{d}\ \ . \end{equation*}

The \emph{depolarizing channels} are thus defined by
\begin{equation} \Delta_\lambda\, \equiv\, \lambda I + (1-\lambda)\, \frac{\mathds{1}}{d}\, \text{\emph{Tr}}\ ,\quad -\frac{1}{d^2-1}\leq\lambda\leq 1\ , \label{D Ch}\end{equation}
where
\begin{equation} \frac{\mathds{1}}{d}\text{\emph{Tr}}\, :\, X\, \longmapsto\, \frac{\mathds{1}}{d}\, \text{\emph{Tr}} X\ . \label{dep id act} \end{equation}
It can be easily seen that the range of the parameter $\lambda$ in~\eqref{D Ch} is chosen in such a way as to guarantee that $\Delta_\lambda$ is always a completely positive (trace-preserving and unital) map. Observe that also a (little) range of negative values is allowed; this would not fit into our probabilistic operative definition, but this is going to be irrelevant. The laws of composition of the depolarizing channels are very simple:
\begin{equation} \Delta_{\lambda_1} \Delta_{\lambda_2} = \Delta_{\lambda_1 \lambda_2}\qquad \big(\Rightarrow\ \Delta_\lambda^n = \Delta_{\lambda^n}\big)\ . \label{D comp}\end{equation}

The class of depolarizing channels is important because its entanglement--breaking properties can be studied analytically. Indeed, in~\cite{HorodeckiDep} it is proved that
\begin{equation} \Delta_\lambda\in\mathbf{EBt}_d\quad\Longleftrightarrow\quad -\frac{1}{d^2-1}\leq\lambda\leq \frac{1}{d+1}\ . \label{D EB}\end{equation}

Thanks to~\eqref{D EB}, we can explicitly compute the actual value of the $n$--index for a depolarizing channel in arbitrary dimension. We are free to suppose $0<\lambda\leq 1$, since the values $\lambda\leq 0$ are immediately known to correspond to entanglement--breaking channels. Then we have
\begin{equation} n\,(\Delta_\lambda)\ =\ \Big\lceil\ \frac{\log\, (d+1)}{\log\,\frac{1}{\lambda}}\ \Big\rceil  \ . \label{n D} \end{equation}
For $\lambda=1$ (actually, $1^-$), this equation gives $n=\infty$, as expected (because $\Delta_1=I$). There are no other values of $\lambda$ sharing this property. The graphic of~\eqref{n D} is shown in \figurename~\ref{nD,fig}.

\vspace{3ex}
\begin{figure}[ht] 
\centering
\includegraphics[height=8cm, width=8cm, keepaspectratio]{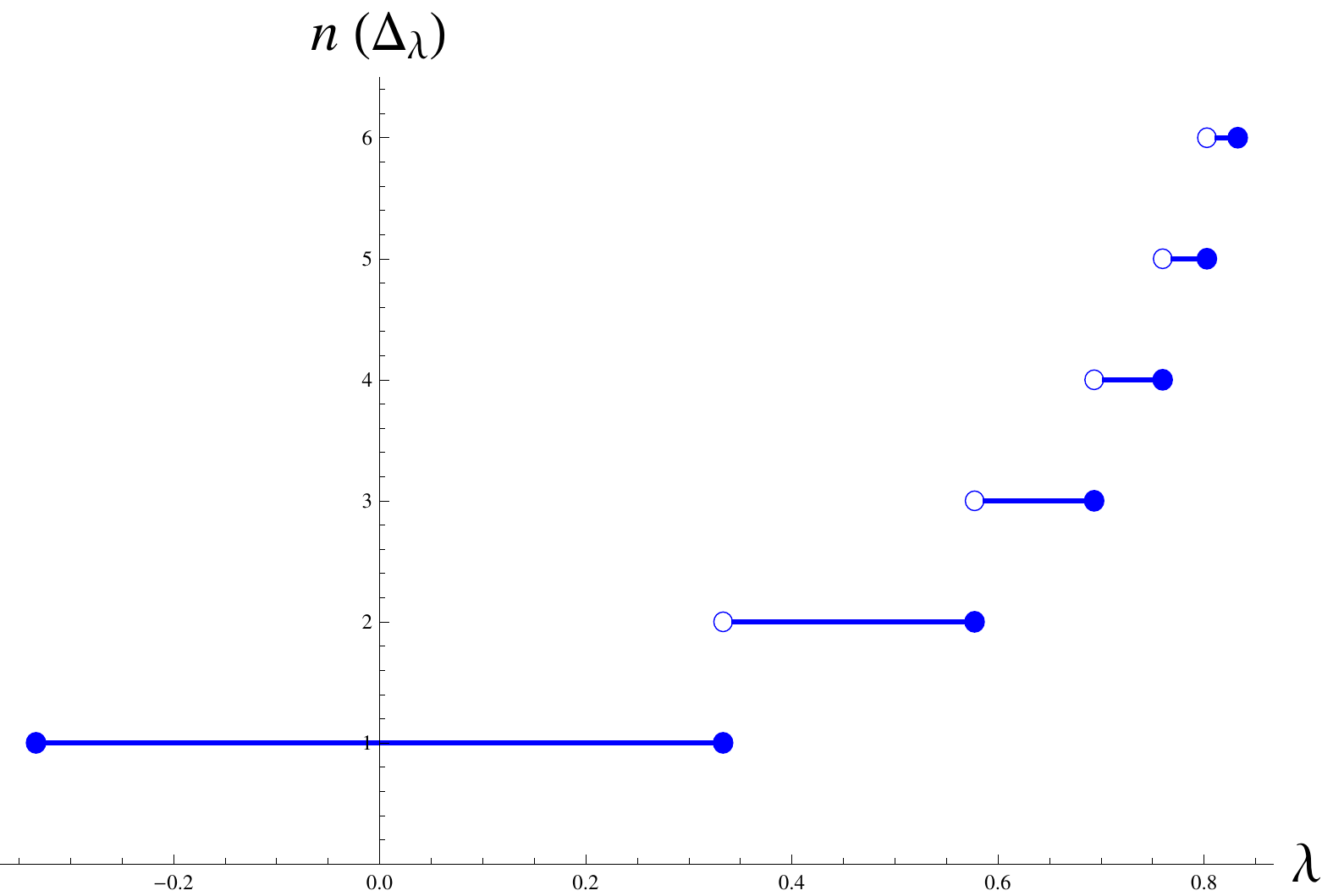}
\caption[$n$--Index for Depolarizing Channels ($d=2$)]{Graphic of the $n$--index as a function of the parameter $\lambda$ for a depolarizing channel. Here the qubit case $d=2$ is shown.}
\label{nD,fig}
\end{figure}

\end{ex}

Until this time, the discussion focused mainly on the $n$--index. Now, let us jump on the opposite side of the inequality~\eqref{elem ineq}. Because of the fact that a minimization over the entire set of $\mathbf{CPt}$ channels is required, the \mbox{$\mathcal{N}$--index} could seem a difficult functional to calculate in practice. Let us make an example to show that this is not always the case. \\

\begin{ex}[$\mathcal{N}$--Index of Depolarizing Channels] \label{N D ex} $\\$
In Example~\ref{n D ex} we introduced the important class of the depolarizing channels, acting on arbitrary $d$--dimensional systems (see~\eqref{D Ch}). We saw in~\eqref{n D} that their $n$--index can be explicitly computed. However, the question remains open, whether it is possible to enhance the entanglement preservation by means of the application of some filtering map. In other words, what can Alice do in order to preserve as much as possible the entanglement with Bob against the noisy action of Eleonore? The answer to this question is simple: \emph{she can do nothing}. This is the same as to say that all the entanglement--breaking indices are equal when calculated on a depolarizing channel:
\begin{equation} n\,(\Delta_\lambda) = \mathcal{N}_U(\Delta_\lambda) = \mathcal{N}\,(\Delta_\lambda)\, =\, \Big\lceil\, \frac{\log (d+1)}{\log\frac{1}{\lambda}}\, \Big\rceil  \ . \label{N D} \end{equation}
In what follows, we suppose as usual $\lambda>0$; otherwise, the depolarizing channels are already entanglement--breaking.

\begin{proof}[Proof of~\eqref{N D}]
Since~\eqref{n D} and~\eqref{elem ineq} hold, in order to prove~\eqref{N D} it suffices to show that
\begin{equation*} n(\Delta_\lambda)\, \geq\, \mathcal{N}\,(\Delta_\lambda)\ , \end{equation*}
i.e. that
\begin{multline*} \Delta_{\lambda^n}\in\mathbf{EBt}_d \ \Rightarrow\\
\Rightarrow\ \Delta_\lambda\psi_1 \Delta_\lambda\ldots \Delta_\lambda \psi_{n-1} \Delta_\lambda \in \mathbf{EBt}_d\ ,\\
\forall\ \ \psi_1,\ldots,\psi_{n-1}\in\mathbf{CPt}_d \ . \end{multline*}

Actually, the equality $n\,(\Delta_\lambda)=\mathcal{N}_U(\Delta_\lambda)$ can be seen as a direct consequence of the fact that \emph{the depolarizing channels commute with all the unitary evolutions}:
\begin{equation} \Delta_\lambda\,\mathcal{U}\equiv\,\mathcal{U}\,\Delta_\lambda\, ,\ \ \forall\ \mathcal{U}\in\mathbf{U}_d\, ,\ \ \forall\ -\frac{1}{d^2-1}\leq\lambda\leq 1\ . \label{D commute U}\end{equation}
Indeed, one could take~\eqref{D commute U} as the \emph{defining property} of the $\Delta_\lambda$s. However, the behavior of $\mathcal{N}(\Delta_\lambda)$ is a priori not obvious.

With the same notation as in~\eqref{dep id act}, it can be easily proved by induction that
\begin{multline} \Delta_\lambda\psi_1 \Delta_\lambda\ldots \Delta_\lambda \psi_{n-1} \Delta_\lambda\ =\ \lambda^n \psi_1\ldots\psi_{n-1}\ +\\
+\ (1-\lambda)\ \sum_{i=1}^{n-1}\, \lambda^i\, \left( \psi_1\ldots\psi_i \right)\left( \frac{\mathds{1}}{d} \right)\ \text{Tr}\ +\ (1-\lambda)\ \frac{\mathds{1}}{d}\ \text{Tr} \ . \label{D eq1}\end{multline}

Moreover, since $\Delta_{\lambda^n}$ is entanglement--breaking, and~\eqref{EB propagates} holds, we must have for every $1\leq i\leq n-1$
\begin{multline} \psi_1\ldots\psi_i\,\Delta_{\lambda^n}\,\psi_{i+1}\ldots\psi_{n-1}\ =\\
=\ \lambda^n \psi_1\ldots\psi_{n-1}\, +\, (1-\lambda^n)\, (\psi_1\ldots\psi_i)\left(\frac{\mathds{1}}{d}\right)\, \text{Tr}\, \in\, \mathbf{EBt}_d \, . \label{D eq2}\end{multline}
The generalization of~\eqref{D eq2} for the ``degenerate case'' $i=0$ can be immediately written as
\begin{multline} \Delta_{\lambda^n}\,\psi_{1}\ldots\psi_{n-1}\ =\\
=\ \lambda^n \psi_1\ldots\psi_{n-1}\ +\ (1-\lambda^n)\ \frac{\mathds{1}}{d}\ \text{Tr}\ \in\ \mathbf{EBt}_d \ . \label{D eq3}\end{multline}

With~\eqref{D eq1},~\eqref{D eq2} and~\eqref{D eq3} at hand, it can be explicitly proved that
\begin{multline} \Delta_\lambda\psi_1 \Delta_\lambda\ldots \Delta_\lambda \psi_{n-1} \Delta_\lambda\ =\\
=\ \sum_{i=0}^{n-1}\ \frac{\lambda^i (1-\lambda)}{1-\lambda^n}\ \ \psi_1\ldots\psi_i\,\Delta_{\lambda^n}\,\psi_{i+1}\ldots\psi_{n-1}\ . \label{D eq4}\end{multline}
Now, we can conclude. In fact, the right-hand side of~\eqref{D eq4} is a convex mixture of the entanglement--breaking channels~\eqref{D eq2} and~\eqref{D eq3}. Since the set $\mathbf{EBt}_d$ is convex, we deduce that
\begin{equation*} \Delta_\lambda\psi_1 \Delta_\lambda\ldots \Delta_\lambda \psi_{n-1} \Delta_\lambda\ \in\ \mathbf{EBt}_d \ .\end{equation*}

\end{proof}
\end{ex}

In order to clarify the role of the various assumptions that make the above calculation possible, it is useful to give a slight refinement of it. \\

\begin{ex}[Generalized Depolarizing Channels] \label{gen D ex} $\\$
Define a generalized depolarizing channel as
\begin{equation} \tilde{\Delta}_\lambda=\lambda I +(1-\lambda)\rho_0\, \text{Tr}\ , \label{tilde D} \end{equation}
where $\rho_0$ is a generic density matrix. Then it can be easily seen that the equalities 
\begin{equation} n(\tilde{\Delta}_\lambda)\,=\ \mathcal{N}_U(\tilde{\Delta}_\lambda)\ =\ \mathcal{N}(\tilde{\Delta}_\lambda) \label{N tilde D} \end{equation}
are still true. Indeed, the crucial equations~\eqref{D comp},~\eqref{D eq1},~\eqref{D eq2},~\eqref{D eq3} and so~\eqref{D eq4} hold even now up to the simple substitutions $\Delta_\lambda\rightarrow\tilde{\Delta}_\lambda$ and $\frac{\mathds{1}}{d}\rightarrow \rho_0$. What we loose in this case is the equivalent of the explicit equation~\eqref{n D}. We can only write an implicit expression
\begin{gather} n(\tilde{\Delta}_\lambda)\,=\ \mathcal{N}_U(\tilde{\Delta}_\lambda)\ =\ \mathcal{N}(\tilde{\Delta}_\lambda)\ =\ \Big\lceil\ \frac{\log\mu}{\log\lambda}\ \Big\rceil\ , \\
\mu\ \equiv\ \max\,\{\, 0\leq a\leq 1 :\ \tilde{\Delta}_a\in\mathbf{EBt}\, \}\, . \label{n tilde D} \end{gather}

\end{ex}

\section{Unitary vs. Non--Unitary Filtering} \label{sec U vs nU filt}

The direct $n$--index can be computed with a relatively easy and efficient algorithm. Given a channel $\phi$, we construct the Choi states $R_{\phi^n}$ and test their separability. The first $R_{\phi^n}$ which turns out to be separable corresponds exactly to $n=n(\phi)$. Even if deciding whether a given bipartite state is separable or not is very difficult (the separability problem is known to be \emph{NP--hard}~\cite{ENTAN}), one could easily get lower bounds by means of some necessary separability criteria and upper bounds by means of the sufficient criteria.

However, the situation is radically different for the filtered indices $\mathcal{N}_U$, $\mathcal{N}$. In that case there seems to be no a priori efficient algorithm allowing their calculation. Indeed, it must be remarked that the defining equations~\eqref{NU} and~\eqref{N} all involve a nontrivial optimization over the whole set of completely positive or unitary channels. Because of the potentially \emph{infinite number} of possible filtering strategies one has to check, the task of calculating the actual value of any filtered index might be impossible.

Interestingly enough, we examined the explicit class of depolarizing channels in arbitrary dimension, for which all the entanglement--breaking indices can be analytically calculated (see Example~\ref{N D ex}). The result of this calculation was clear: \emph{for a depolarizing channel $n=\mathcal{N}_U=\mathcal{N}$}. In this context, as already observed, the equality $n=\mathcal{N}_U$ has to be seen as a mere consequence of the incidental property~\eqref{D commute U}. However, one could think that the other equality $\mathcal{N}_U=\mathcal{N}$ is more fundamental. What should be the intuitive meaning of this equality?

The filtering maps appearing in Definition~\ref{EBi} play the role of preserving as much as possible the entanglement between Alice and a Bob. From the point of view of the Stinespring representation, every non-unitary filter acting on $A$ can be simulated by a unitary operation on a larger system $AE$ ($E$ being an external environment). This viewpoint has been already exploited in stating Theorem~\ref{St dil}. Anyway, because of this global unitary evolution, some of the entanglement initially present between $A$ and $B$ is wasted to create uncontrolled, apparently useless quantum correlations with $E$. This invariably weakens the link between Alice and Bob. Anyway, all that can be avoided if Alice chooses to use only unitary filters. Thanks to this discussion, one could think that \emph{the optimal filtering strategy might involve, after all, only unitary filters}.

Perhaps surprisingly, this in general false. In other words, \emph{it can happen that the best unitary filtering strategy is much less effective than a non--unitary one}. We devote the rest of this section to the construction of an explicit example of this behavior, for all dimensions $d\geq 3$. \\

\begin{ex} \label{cex UFO} $\\$
In~\cite{Werner}, Werner introduces the $(U\otimes U)$--invariant states on a bipartite $(d\times d)$--dimensional system:
\begin{equation} \chi_\varphi\equiv\,\frac{(d\varphi-1) S + (d-\varphi)\mathds{1}}{d\,(d^2-1)}\ ,\ \ -1\leq\varphi\equiv\text{\emph{Tr}}[\chi_\varphi S]\,\leq 1\ . \label{chiW states phi}\end{equation}
Here, the symbol $S$ denotes the \emph{swap operator}, defined on a bipartite system by the equation
\begin{equation*} S\ \Ket{\alpha}\otimes\Ket{\beta}\ =\ \Ket{\beta}\otimes\Ket{\alpha}\ . \end{equation*}
For the sake of simplicity, it is more convenient to make the substitution
\begin{equation*} \eta\,\equiv\, \frac{1-d\varphi}{d^2-1}\ , \end{equation*}
by means of which one has
\begin{equation} \chi_\eta\,\equiv\,-\,\eta\,\frac{S}{d}\, +\, (1+\eta)\,\frac{\mathds{1}}{d^2}\ ,\quad -\frac{1}{d+1}\leq\eta\leq\frac{1}{d-1}\ . \label{chiW states}\end{equation}

In what follows we shall adopt $\eta$ as our parameter. Remarkably, Werner proved that the precise range of $\eta$ (or $\varphi$) can be determined, for which $\chi_\eta$ is separable:
\begin{equation} \chi_\eta\ \text{is separable}\quad \Longleftrightarrow\quad -\frac{1}{d+1}\leq\eta\leq\frac{1}{d^2-1}\ . \label{chiW sep}\end{equation}

We highlight that~\eqref{chiW sep} is a great conceptual achievement, because of the intrinsic difficulties one encounters when dealing with the separability problem in generic dimension. We can move the whole power of~\eqref{chiW sep} into the world of quantum channels, thanks to the Choi--Jamiolkowski isomorphism $\phi\leftrightarrow R_\phi$. The Choi dual of~\eqref{chiW states} is
\begin{equation} V_\eta\,\equiv-\,\eta\, T\, +\, (1+\eta)\, \frac{\mathds{1}}{d}\, \text{\emph{Tr}}\ ,\quad -\frac{1}{d+1}\leq\eta\leq\frac{1}{d-1}\, . \label{V Ch}\end{equation}
This equation defines a one-parameter set of $\mathbf{CPt}_d$ quantum channels, just like~\eqref{D Ch}. We adopt the standard notation of~\eqref{dep id act}, and indicate with $T$ the matrix transposition. Moreover,~\eqref{chiW sep} becomes
\begin{equation} V_\eta\in\mathbf{EBt}_d\quad \Longleftrightarrow\quad -\frac{1}{d+1}\leq\eta\leq\frac{1}{d^2-1}\ . \label{V EB}\end{equation}

It is worth noting that these \emph{Werner channels} $V_\eta$ obey simple rules of composition, which complete~\eqref{D comp} :
\begin{equation} V_{\eta_1}V_{\eta_2} = \Delta_{\eta_1\eta_2}\ ,\quad V_\eta \Delta_\lambda = \Delta_\lambda V_\eta = V_{\lambda\eta}\ . \label{V comp}\end{equation}
Moreover, an equality analogous to~\eqref{D commute U} holds:
\begin{equation} V_\eta\,\mathcal{U}\,\equiv\,\mathcal{U}^* V_\eta\ ,\quad \forall\ \mathcal{U}\in\mathbf{U}\ . \label{V almost comm U}\end{equation}
If $\,\mathcal{U}(X)=UXU^\dag$, here we indicate with $\,\mathcal{U}^*$ the channel $\,\mathcal{U}^*(X)=U^* X U^T$ (which is nothing but the conjugation by $U^*$).

Although it is not immediately obvious, the channels~\eqref{D Ch} and~\eqref{V Ch} are unitary equivalent for the qubit case $d=2$. More precisely, the fortuitous equality
\begin{equation} d=2\quad \Rightarrow\quad \mathds{1}\text{\emph{Tr}} - I \,=\, \mathcal{Y}T  \label{dep = T q} \end{equation}
(where $\mathcal{Y}$ indicates the conjugation by the second Pauli matrix, and $T$ the matrix transposition, as usual) allows us to prove that
\begin{equation} d=2\quad \Rightarrow\quad V_\eta\,\equiv\,\mathcal{Y}\,\Delta_\eta\ . \label{V = D q} \end{equation}
Because of~\eqref{V = D q}, the qubit case does not deserve any further attention; we analyzed it in Examples~\ref{n D ex} and~\ref{N D ex}. On the contrary, for $d\geq 3$ these two sets of channels are truly different. Now, observe that
\begin{multline} d\geq 3\quad \Rightarrow\\
\Rightarrow\quad \forall\ -\frac{1}{d+1}\leq\eta\leq\frac{1}{d-1}\, ,\quad \eta^2\leq\frac{1}{d+1}\ . \label{V d>2}\end{multline}
Thanks to~\eqref{V comp} and to~\eqref{D EB}, this is the same as to say that
\begin{equation} d\geq 3\quad \Rightarrow\quad V_\eta^2\in\mathbf{EBt}_d\ . \label{V comp EB} \end{equation}

But not only: provided that $d\geq 3$,~\eqref{V almost comm U} (together with~\eqref{V comp EB}) implies that \emph{there is no unitary filter we can use in order to prevent the complete destruction of the entanglement after two iterations}. Indeed, if $\,\mathcal{U}$ is an unitary evolution,
\begin{equation} V_\eta\,\mathcal{U}\,V_\eta\ =\ \mathcal{U}^*\,V_\eta^2\ \in\ \mathbf{EBt}_d\ . \label{U filt V useless} \end{equation}
Observe that we used also~\eqref{EB propagates} in the last passage. In other words, we proved that
\begin{equation} d\geq 3\quad \Rightarrow\quad n\,(V_\eta)\,=\, \mathcal{N}_U(V_\eta)\,=\, 2\ . \label{NU V}\end{equation}

Therefore, whatever unitary filtering strategy is in the present case demonstrably useless. Let us try another kind of quantum channel as a filter. In the following we shall deal only with the extreme case $\eta=\frac{1}{d-1}$. Indeed, in that case the calculations are much simpler. Consider the Hilbert space $\mathds{C}^d$ (with $d\geq 3$) spanned by the $d$ vectors $\{\Ket{0},\,\Ket{1},\ldots,\Ket{d-1} \}$. Moreover, define the quantum channel $\psi$ whose action is
\begin{multline} \psi(\rho)\ =\ \left(\, \Ket{0}\!\!\Bra{1} + \Ket{1}\!\!\Bra{0}\, \right)\ \rho\ \left(\, \Ket{0}\!\!\Bra{1} + \Ket{1}\!\!\Bra{0}\, \right)\ +\\
+\ \sum_{i=2}^{d-1} \Ket{0}\!\!\Bra{i}\rho\Ket{i}\!\!\Bra{0}\ . \label{cex psi} \end{multline}
A more compact form of~\eqref{cex psi} can be written if one decomposes $\rho$ as a block matrix
\begin{equation*} \rho\ =\ \begin{pmatrix} A & B \\ B^\dag & C \end{pmatrix}\ \ , \end{equation*}
where $A$ and $C$ have sizes $2\times 2$ and $(d-2)\times(d-2)$, respectively, while $B$ is a $2\times(d-2)$ rectangular matrix. In that case, denoting by $X$ the first Pauli matrix, one has
\begin{equation} \psi(\rho)\ =\ \psi\,\begin{pmatrix} A & B \\ B^\dag & C \end{pmatrix}\ =\ \begin{pmatrix} XAX+\Ket{0}\!\!\Bra{0} \text{\emph{Tr}}\, C & 0 \\ 0 & 0 \end{pmatrix}\ . \label{cex psi block} \end{equation}
Observe that for every $k\geq 1$ one has
\begin{equation} \psi^{2k-1}\equiv\,\psi\ ,\quad \psi^{2k}\equiv\,\psi^2\ . \label{cex V eq1}\end{equation}
Moreover, we have simply
\begin{equation} T\psi T\ =\ \psi\ . \label{cex V eq2} \end{equation}

Now, we claim that for every $n\in\mathds{N}$ and $d\geq 3$, one has
\begin{equation} \underbrace{V_{\frac{1}{d-1}}\ \psi\ V_{\frac{1}{d-1}}\ \ldots\ V_{\frac{1}{d-1}}\ \psi\ V_{\frac{1}{d-1}}}_{\text{$V_\frac{1}{d-1}$ repeated $2n+1$ times}}\ \ \notin\ \ \mathbf{EBt}_d \label{cex t}\end{equation}
As a consequence,
\begin{equation} \mathcal{N}\left(V_{\frac{1}{d-1}}\right)\ =\ \infty\ \ . \label{N V}\end{equation}
Observe that equations~\eqref{N V} and~\eqref{NU V} \emph{explicitly prove} (for every $d\geq 3$) that the non-unitary filtering strategies can be much better than the unitary ones.

\begin{proof}[Proof of~\eqref{cex t}] $\\$
In order to prove~\eqref{cex t}, we will write the Choi matrix $R_{T\xi}$ corresponding to $T\xi$; here we have defined for short
\begin{equation*} \xi\ \equiv\ \underbrace{V_{\frac{1}{d-1}}\ \psi\ V_{\frac{1}{d-1}}\ \ldots\ V_{\frac{1}{d-1}}\ \psi\ V_{\frac{1}{d-1}}}_{\text{$V_\frac{1}{d-1}$ repeated $2n+1$ times}}\ \ . \end{equation*}
Next, we will verify that $R_{T\xi}\ngeq 0$; the PPT criterion will imply that $\xi\notin\mathbf{EBt}$, i.e. the thesis.

Firstly, write for the $V_\eta$ channels the analogous of the composition formula~\eqref{D eq1}, with the same shorthand notation as in~\eqref{dep id act} :
\begin{multline} V_\eta\psi_1 V_\eta\ldots V_\eta \psi_{k-1} V_\eta\ =\ (-\eta)^k\ T\psi_1T\ldots T\psi_{k-1}T\ +\\
+\ (1+\eta)\ \sum_{i=1}^{k-1}\, (-\,\eta)^i\, \left( T\psi_1\ldots T\psi_i \right)\left( \frac{\mathds{1}}{d} \right)\ \text{Tr}\ +\\
+\ (1+\eta)\ \frac{\mathds{1}}{d}\ \text{Tr} \ . \label{cex V eq3}\end{multline}
In our case we have $\psi_1=\ldots=\psi_{k-1}=\psi$, $k=2n+1$ and $\eta=\frac{1}{d-1}$. Because of equations~\eqref{cex V eq1} and~\eqref{cex V eq2},~\eqref{cex V eq3} becomes
\begin{multline} T\,\xi\ =\ -\,\frac{1}{(d-1)^{2n+1}}\ \psi^2\ -\\
-\, \frac{(d-1)^{2n}-1}{d(d-2)(d-1)^{2n}}\, \left( \psi(\mathds{1})\,-\,\frac{1}{d-1}\ \psi^2(\mathds{1})\right)\ \text{Tr}\ +\\
+\ \frac{\mathds{1}}{d-1}\ \text{Tr}\ . \label{cex V eq4}\end{multline}

The Choi matrix $R_{T\xi}$ is a complicated object. However, we are interested only in proving that \emph{it is not positive definite}. To this purpose, we can examine its restriction to the subspace spanned by $\{\, \Ket{00},\,\Ket{11}\,\}$. Thanks to the properties of the Choi--Jamiolkowski isomorphism, we have
\begin{equation} \Braket{ij|R_{T\xi}|kl}\ =\ \frac{1}{d}\ \Braket{i|\, (T\xi)\,(\Ket{j}\!\!\Bra{l})\,|k}\ \ . \end{equation}
By applying repeatedly this identity and~\eqref{cex psi block}, one can see that
\begin{equation*} \Braket{00|R_{T\xi}|00}\, =\, 0 \, ,\quad \Braket{00|R_{T\xi}|11}\, =\, -\frac{1}{d\,(d-1)^{2n+1}}\ . \end{equation*}
Therefore, there exists $a\in\mathds{R}$ such that
\begin{equation*} R_{T\xi}\big|_{\text{Span}\, \{ \Ket{00},\,\Ket{11} \}}\, =\ \frac{1}{d\,(d-1)^{2n+1}}\ \begin{pmatrix} 0 & -1 \\ -1 & a \end{pmatrix}\ \ . \end{equation*}
Since
\begin{equation*} \det\ \begin{pmatrix} 0 & -1 \\ -1 & a \end{pmatrix}\ =\ -1 \ , \end{equation*}
the restriction $R_{T\xi}\big|_{\text{Span}\, \{ \Ket{00},\,\Ket{11} \}}$ can not be positive definite. This necessarily forbids $R_{T\xi}\geq 0$, and so $T\xi\notin\mathbf{CPt}$. Thanks to the PPT criterion, we can conclude that $\xi\notin\mathbf{EBt}$, i.e.~\eqref{cex t}.

\end{proof}
\end{ex}

Let us make the main point one more time. Example~\ref{cex UFO} shows that the optimal filtering strategy to be used by Alice against the local noise can be, as a matter of fact, \emph{non-unitary}. From the physical point of view, we are claiming that Alice can be forced to introduce other (controlled) disturbances into her system, so as to save the entanglement with Bob. Moreover, equations~\eqref{NU V} and~\eqref{N V} show that the difference between the best unitary strategy and the best non-unitary one can be dramatic. The former causes the almost immediate destruction of the entanglement, while the latter allows its unlimited survival. 
A similar non--optimality of unitary filtering operations was also observed in the context of quantum subdivision capacities~\cite{MULLER}. All that may appear quite counterintuitive. However, a possible (albeit not rigorous) physical justification can be found by invoking an argument which has been introduced in the study of optimal recovery transformations~\cite{PATA} where, assigned a given dynamical semigroup, one is asked to identified the best output quantum data--process that guarantees that the average output  fidelity is maximal. 
The idea is as follows. Since quantum channels represent the occurrence of stochastic errors, in general they will tend to pump in entropy into the system (heating and diffusional processes) or, vice-versa, pump out entropy from the system (dissipative or cooling processes). In both cases, to fight such effects one needs to  modify the entropy content of the state, i.e. again with  dissipative or heating processes:
unitary recovery operations appear not to be well suited for this purpose, as the best they can do is to concentrate the entropy extras or deficits into a specific subsystem, without removing them.

\section{Filtered Indices for Qubit Channels} \label{sect filt q}

An amazing fact about the Example~\ref{cex UFO} in that it works only for $d\geq 3$. This restriction comes from~\eqref{V comp EB}, and instills in us a glimmer of hope that things could be different, after all, for $d=2$. For this reason, this Section is devoted to the investigation of the qubit case. In fact, the Bloch representation~\eqref{Bloch repr} can considerably simplify the theory for two-dimensional systems. An explicit example of this simplification is presented in Subsection~\ref{subsect NU u q}. We continue our analysis by proposing a general conjecture in Subsection~\ref{subsect conj}. Finally, some partial proofs of this conjecture are showed in Subsection~\ref{subsect proof infty} and~\ref{subsect hard proof}.

\subsection{Unitary Filtered Index for Unital Qubit Channels} \label{subsect NU u q}

We begin by translating in our language and notation a result originally proved in~\cite{V} (although in a slightly weaker form). All that will solve the simplest problem of the calculation of $\mathcal{N}_U$ for an unital qubit channel.

For the sake of clearness, we firstly recall some facts about the canonical diagonal form for a qubit quantum channel $\phi=(M,c)$. For details, we refer the reader to~\cite{KingRuskai} and~\cite{RuskaiCPqubit}. Let $M=P D Q$ be a singular value decomposition of $M$, with $P,Q$ orthogonal matrices. Denoting by~$\{ s_i\}$ the singular values of $M$, we have
\begin{equation*} D\, =\, \begin{pmatrix} s_1 & 0 & 0 \\ 0 & s_2 & 0 \\ 0 & 0 & s_3 \end{pmatrix}\, .  \end{equation*}
In order to give a physical interpretation to this algebraic decomposition, it is not sufficient that $P,Q$ are orthogonal, but it is necessary that \mbox{$P,Q\in\text{SO}(3)$} (i.e. they must be \emph{special} orthogonal). Suppose that this does not happen, and examine the other cases. If \mbox{$\det P= \det Q=-1$} (and so $\det M\geq0$) we can simply write $M=(-P)D(-Q)$, in such a way that \mbox{$\det(-P)=\det(-Q)=+1$} and therefore \mbox{$-P,-Q\in \text{SO}(3)$}. On the other hand, if \mbox{$\det P=-1=-\det Q$} or the converse (and so \mbox{$\det M\leq 0$}), we must modify $D$ and write for example $M=\tilde{P}\tilde{D}Q$, with
\begin{equation*} \tilde{D} \equiv \begin{pmatrix} s_1 & 0 & 0 \\ 0 & s_2 & 0 \\ 0 & 0 & -s_3 \end{pmatrix}\, ,\qquad \tilde{P} \equiv P \begin{pmatrix} 1 & 0 & 0 \\ 0 & 1 & 0 \\ 0 & 0 & -1 \end{pmatrix} \in \text{SO}(3)\, . \end{equation*} 
This discussion should convince the reader that the best \emph{special singular value decomposition} we can achieve is of the form $M=O_1 L O_2$, with $O_1,O_2\in\text{SO}(3)$ and
\begin{multline} L = \begin{pmatrix} l_1 & 0 & 0 \\ 0 & l_2 & 0 \\ 0 & 0 & l_3 \end{pmatrix} \equiv \begin{pmatrix} s_1 & 0 & 0 \\ 0 & s_2 & 0 \\ 0 & 0 & \text{sgn} \det (M)\ s_3 \end{pmatrix} . \label{k} \end{multline}
Here the symbol $\text{sgn}$ denotes the \emph{sign function}, defined by
\begin{equation*} \text{sgn}\ x\, \equiv\, \left\{ \begin{array}{cr} +1 & \text{ if $x>0$}\\ 0 & \text{ if $x=0$}\\ -1 & \text{ if $x<0$} \end{array} \right.\ \ . \end{equation*}
Usually we shall suppose $|s_3|\leq s_1,s_2$, so that $l_1,l_2\geq 0$ and only $l_3$, which has the lowest modulus, can be negative. Once the decomposition $M=O_1 L O_2$ is obtained, we can define $t\equiv O_1^T c$ and write 
\begin{equation} \phi\ =\ (M,c)\ =\ O_1\ (L, t)\ O_2\ =\ \mathcal{U}\ \Lambda\ \mathcal{V}\ . \label{q canonical form} \end{equation}
Here $\mathcal{U},\mathcal{V}$ are the unitary channels corresponding to $O_1,O_2\in\text{SO}(3)$, and $\Lambda\equiv(L,t)$ is the \emph{canonical diagonal form} of $\phi$. \\

\begin{thm} \label{NU=mU unital q} $\\$
Let $(M,0)\in\mathbf{CPt}_2$ be an unital qubit channel. Denote by $M=O_1 L O_2$ the special singular value decomposition of $M$. Then
\begin{multline} \mathcal{N}_U(M,0)\, =\, n(L,0)\, =\\
=\, \min{\{\, n\geq 1\, :\ \sum_{i=1}^3 |l_i|^n\leq 1\, \}}\, =\\
=\, \min{\{\, n\geq 1\, :\ \|M\|_n\leq 1\, \}}\, . \label{mU=NU unital q} \end{multline}
\end{thm}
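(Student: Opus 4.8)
The plan is to squeeze $\mathcal{N}_U(M,0)$ between the two obvious bounds and check they coincide with $n(L,0)$. Write $N \equiv n(L,0) = \min\{n\geq 1 : \sum_i |l_i|^n \leq 1\}$; the equalities $n(L,0) = \min\{n : \|M\|_n\leq 1\}$ follow at once from~\eqref{EB q u} applied to the unital channel $(L^n,0) = (L,0)^n$, since the singular values of $L^n$ are the $|l_i|^n$ and the Schatten $1$-norm of a diagonal matrix is the sum of the moduli of its entries, while $\|M\|_n^n = \sum_i s_i^n = \sum_i |l_i|^n$. So the real content is the first equality.

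For the \emph{upper} bound $\mathcal{N}_U(M,0)\leq N$, I want to exhibit unitary filters that force $(M,0)$ composed with them into $\mathbf{EBt}_2$ after $N$ applications — but actually the statement is a ``$\forall$'' inside the definition of $\mathcal{N}_U$, so I need to show that \emph{every} interleaving by unitary filters is already EB at step $N$; equivalently $\mathcal{N}_U(M,0)\leq N$ amounts to: for all $\mathcal{U}_1,\dots,\mathcal{U}_{N-1}\in\mathbf{U}_2$, the channel $\phi\,\mathcal{U}_1\phi\cdots\phi\,\mathcal{U}_{N-1}\phi$ is EB. Using~\eqref{q canonical form} write $\phi = (M,0) = O_1 L O_2$ (all unital), so up to absorbing $O_1,O_2$ into the filters (legitimate by the unitary-conjugation property~\eqref{NUC}, or just by relabelling), the interleaved product has Bloch matrix $L R_1 L R_2 \cdots L R_{N-1} L$ where each $R_j\in\mathrm{O}(3)$. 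Now invoke the key submultiplicativity fact: for the Schatten-$1$ norm, $\|L R_1 L \cdots R_{N-1} L\|_1 \leq \|L\|_1 \cdot \|R_1\|_\infty \cdots$ is too weak, so instead I use $\|ABC\|_1 \le \|A\|_\infty \|B\|_1 \|C\|_\infty$ iteratively — but the sharp tool here is that a product of $N$ copies of $L$ interleaved with orthogonal matrices has $\|\cdot\|_1 \le \prod(\text{something})$; the clean statement is $\|L R_1 \cdots R_{N-1} L\|_1 \le \|L^N\|_1$ when the $R_j$ are orthogonal, which is \emph{false} in general, so this direct route needs care. The correct observation (this is the crux) is: $\|L R_1 L R_2 \cdots L\|_1 \le \|\,|L|\cdot R_1 \cdot |L|\cdots\|_1$, and by the singular-value/weak-majorization inequality for products, $s(L R_1 L \cdots L) \prec_w s(L)^{\odot N}$ pointwise-bounded so that $\sum_k s_k(LR_1\cdots L) \le \prod\text{-type bound} \le \sum_k |l_k|^N$ whenever $\max_k|l_k|\le 1$ (true since $L$ is the Bloch matrix of a channel, so $s_1\le 1$). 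Then $\sum_i |l_i|^N \le 1$ by definition of $N$, and~\eqref{EB q u} gives EB.

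For the \emph{lower} bound $\mathcal{N}_U(M,0)\ge N$, I need a single choice of unitary filters $\mathcal{U}_1,\dots,\mathcal{U}_{N-2}$ such that $\phi\,\mathcal{U}_1\cdots\mathcal{U}_{N-2}\,\phi$ (a product of $N-1$ copies of $\phi$) is \emph{not} EB, i.e. $\|L R_1 L\cdots L\|_1 > 1$ for a suitable choice of orthogonal $R_j$. The natural candidate is to take each filter to ``realign'' the diagonal form so that the product becomes diagonal with entries $\pm|l_i|^{N-1}$; since $L = O_1^{-1}MO_2^{-1}$ is diagonal, choosing $R_j$ to be a permutation-and-sign matrix fixing the eigendirections makes $L R_1 L\cdots L$ diagonal with entries $|l_i|^{N-1}$ (possibly with signs), whence its $1$-norm is $\sum_i |l_i|^{N-1} > 1$ by minimality of $N$. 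One must check such $R_j$ are genuinely in $\mathrm{O}(3)$ and that the extra $O_1,O_2$ from~\eqref{q canonical form} can be undone — again via~\eqref{NUC}. Then the PPT/separability criterion~\eqref{EB q u} (the ``only if'' direction) shows this product is not EB, so $\mathcal{N}_U(M,0) > N-1$, i.e. $\ge N$.

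The main obstacle is the upper-bound step: proving that \emph{arbitrary} orthogonal interleaving cannot make $\|L R_1 L\cdots L\|_1$ exceed $\sum_i|l_i|^N$. The honest tool is the multiplicative Horn/weak-majorization inequality $\prod_{k=1}^j s_k(AB) \le \prod_{k=1}^j s_k(A)s_k(B)$, applied inductively across the $N$ factors (the orthogonal $R_j$ have all singular values $1$, so they drop out of the products), giving $\prod_{k=1}^j s_k(LR_1\cdots L) \le \prod_{k=1}^j |l_{\pi(k)}|^N$ for the decreasing rearrangement; combined with $|l_1|\le 1$ this yields $\sum_k s_k \le \sum_k |l_k|^N \le 1$. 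I expect this majorization argument — rather than any naive norm submultiplicativity — to be the technical heart of the proof; everything else is bookkeeping with~\eqref{q canonical form},~\eqref{NUC}, and~\eqref{EB q u}.
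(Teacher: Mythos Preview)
Your proposal is essentially correct and lands on the same key inequality as the paper, namely $\|LR_1L\cdots R_{N-1}L\|_1 \le \|L^N\|_1$ for diagonal $L$ and orthogonal $R_j$, but by a different route. The paper proves this (in fact for all Schatten $p$-norms simultaneously) by induction on the number of factors, using H\"older's inequality in the form $\|AB\|_p \le \|A\|_{rp}\|B\|_{sp}$ with $r=n+1$, $s=(n+1)/n$, together with unitary invariance of the Schatten norms and the diagonal identity $\|L\|_{np}^{\,n}=\|L^n\|_p$. You instead invoke Horn's multiplicative inequality $\prod_{k\le j} s_k(AB)\le\prod_{k\le j}s_k(A)\,s_k(B)$ iterated across the $N$ factors (the orthogonal $R_j$ contribute nothing) to get log-weak-majorization $s(LR_1\cdots L)\prec_{w\log}(|l_k|^N)_k$, and then pass to ordinary weak majorization to obtain the trace-norm bound. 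Both arguments are valid; the paper's H\"older induction is more self-contained and delivers the all-$p$ statement for free, while your singular-value route imports more machinery but is a clean one-shot with no induction. Two small corrections to your write-up: the inequality $\|LR_1\cdots L\|_1\le\|L^N\|_1$ that you label ``false in general'' is in fact \emph{true} for diagonal $L$ --- this is exactly what both proofs establish, and your own majorization paragraph proves it --- and the passage from log-majorization to weak majorization (convexity of the exponential) does not require the extra hypothesis $|l_1|\le 1$ that you append. For the lower bound no permutation-and-sign gymnastics are needed: simply take each $R_j=I\in\mathrm{SO}(3)$ to obtain the product $L^{N-1}$, which is precisely the paper's one-line observation $n(L,0)\le\mathcal{N}_U(L,0)=\mathcal{N}_U(M,0)$ via~\eqref{elem ineq} and~\eqref{NUC}.
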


\begin{proof}
The explicit expressions for $n(L,0)$ are direct consequences of~\eqref{nfu2}, and of the elementary observation
\begin{equation*} \sum_{i=1}^3 |l_i|^n\, =\, \|L\|_n\, =\, \|M\|_n\, . \end{equation*}
On the other hand, elementary properties~\eqref{NUC} and \eqref{elem ineq} ensure that
\begin{equation*} n(L,0) \leq\, \mathcal{N}_U (L,0) =\, \mathcal{N}_U (M,0)\, . \end{equation*}
Consequently, the only nontrivial claim is that \mbox{$n(L,0)\geq\mathcal{N}_U(L,0)$}, so that the inequality in the previous equation is actually an equality. Thanks to~\eqref{nfu2}, we have only to prove the $p=1$ case of the following statement:
\begin{multline} \forall\, n\geq 1\, ,\ \ \forall\, \ O_1, \ldots, O_n\in \text{SO}(3)\, ,\\
\|LO_1L\ldots L O_n L \|_p\,\leq\, \| L^{n+1} \|_p \ . \label{NmUq eq1}\end{multline}
Indeed,~\eqref{NmUq eq1} would imply that the channel $(LO_1L\ldots L O_n L,\ 0)$ must necessarily be entanglement--breaking if so is $L^{n+1}$ .

In what follows, we will use extensively the well-known H\"{o}lder inequality
\begin{multline} \frac{1}{r}+\frac{1}{s}=1\ ,\ \ 1\leq p\leq \infty\ \ \Rightarrow\\
\Rightarrow\ \ \|AB\|_p\, \leq\, \| |A|^r \|_p^{1/r} \, \| |B|^s \|_p^{1/s}\, =\\
=\, \|A\|_{rp} \, \|B\|_{sp}\, . \label{Hoelder}\end{multline}

Since $L$ is diagonal, observe that for every $1\leq p\leq\infty$ and for every integer $n\geq 1$ we can write
\begin{equation} \|L\|_{np} = \|L^n\|_p^{1/n} \label{L diag H}\end{equation}
Then, the best way to prove~\eqref{NmUq eq1} is by induction.
\begin{itemize}
\item For $n=1$, thanks to the $r=s=2$ case of~\eqref{Hoelder} we have
\begin{gather*} \|L (O L) \|_p\ \leq\ \|L\|_{2p}\  \|O L\|_{2p}\ =\\
=\ \|L\|_{2p}\ \|L\|_{2p}\ =\ \|L\|_{2p}^2\ =\ \|L^2\|_p\ , \end{gather*}
where we used the unitary invariance of the Schatten norms, together with~\eqref{L diag H}.

\item Now, suppose that we have proved the inequality for every $p$ and for $n-1$; we can apply H\"{o}lder again with $r=n+1,\ s=\frac{n+1}{n}$, obtaining
\begin{gather*}
\|L O_1 L \ldots L O_n L \|_p\ \leq\\
\leq\ \|L\|_{(n+1)p} \ \|O_1L \ldots O_n L\|_{\frac{n+1}{n}p}\ =\\
=\ \|L\|_{(n+1) p}\ \|L O_2 L \ldots L O_n L\|_{\frac{n+1}{n}p}\ \leq\\
\leq\ \|L\|_{(n+1)p}\ \|L^n \|_{\frac{n+1}{n}p}\ =\\
=\ \|L^{n+1}\|_p^\frac{1}{n+1}\ \|L^{n+1} \|_p^{\frac{n}{n+1}}\ =\ \|L^{n+1}\|_p\ .
\end{gather*}
We used, in order,~\eqref{Hoelder}, the unitary invariance of the Schatten norms, the inductive hypothesis, and~\eqref{L diag H}.
\end{itemize}
\end{proof}

It is worth noting that Proposition~\ref{NU=mU unital q} gives us a simple procedure to calculate the unitary filtered index $\mathcal{N}_U$ at least in the simplest case of unital qubit channels. In spite of the strict restriction it is subjected to,~\eqref{mU=NU unital q} is quite encouraging. In fact, it shows how the theory of the filtered indices can be simpler in the qubit case than in general, because of the low dimensionality of the system under examination.

\subsection{Non--Unitary Filtered Indices for Qubit: a Conjecture} \label{subsect conj}

We have already observed that Example~\ref{cex UFO}, showing the non--optimality of purely unitary fitering strategies, works only in dimension $d\geq 3$. Consequently, one could ask himself, whether or not the unitary filters are optimal for qubit channels. This is the main content of the following conjecture.

\begin{cj} \label{UFO q} $\\$
Let $\phi\in\mathbf{CPt}_2$ be a qubit channel. Then
\begin{equation} \mathcal{N}(\phi)\ =\ \mathcal{N}_U(\phi)\ =\ \max{\{\, n\,(\mathcal{U}\phi)\, : \ \mathcal{U}\in\mathbf{U}_2\, \}} \ . \label{UFO eq} \end{equation}
\end{cj}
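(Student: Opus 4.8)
The plan is to reduce the full non-unitary optimization to a unitary one, by exploiting the low dimensionality of the qubit and the explicit characterization~\eqref{EB q u} of entanglement-breaking \emph{unital} qubit channels. First I would try to establish the ``easy'' inequality
$\mathcal{N}_U(\phi) = \max\{n(\mathcal{U}\phi):\mathcal{U}\in\mathbf{U}_2\}$ directly. For this, recall that a filtered strategy $\phi\,\mathcal{U}_1\phi\cdots\phi\,\mathcal{U}_{n-1}\phi$ can be regrouped as $(\phi\mathcal{U}_1)(\phi\mathcal{U}_2)\cdots(\phi\mathcal{U}_{n-1})\phi$; using the unitary-conjugation invariance~\eqref{NUC} and the way composition with entanglement-breaking channels propagates~\eqref{EB propagates}, one should be able to show that the worst case over all unitary insertions is attained by a single repeated optimal $\mathcal{U}$, i.e. by iterating $\mathcal{U}\phi$ for the $\mathcal{U}$ maximizing $n(\mathcal{U}\phi)$. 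The point here is that in the Bloch picture $\phi=(M,c)$ and $\mathcal{U}\phi$ has matrix part $OM$ with $O\in\mathrm{SO}(3)$, so $n(\mathcal{U}\phi)$ is governed by the singular values of $M$ (as in Theorem~\ref{NU=mU unital q}) and the translation part only helps; a submultiplicativity argument in the spirit of the H\"older-inequality induction in the proof of Theorem~\ref{NU=mU unital q} should give that no alternating choice of $O_i$ beats the constant one.

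The hard direction is $\mathcal{N}(\phi)\le\mathcal{N}_U(\phi)$, i.e. that non-unitary $\mathbf{CPt}_2$ filters cannot outperform unitary ones. The strategy I would pursue is: given arbitrary filters $\psi_1,\dots,\psi_{n-1}\in\mathbf{CPt}_2$, show that if $n\ge\mathcal{N}_U(\phi)$ then $\phi\psi_1\phi\cdots\phi\psi_{n-1}\phi\in\mathbf{EBt}_2$. By the ``reduction to the extreme points'' property in the Elementary Properties proposition, it suffices to treat each $\psi_i$ as an extreme point of $\mathbf{CPt}_2$; extreme qubit channels have at most two Kraus operators (Ruskai et al.), which is a strong structural constraint. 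One would then want to argue that each such extreme channel either is unitary, or is itself sufficiently noisy (e.g. already EB, or contractive enough in Bloch-norm terms) that inserting it can only help break entanglement; combined with~\eqref{EB propagates}, replacing an EB filter by anything unitary can only make the concatenation ``less EB'', so the unitary case is the genuinely worst one. Making this precise likely requires a case analysis of extreme qubit channels in canonical form~\eqref{q canonical form}, tracking how the matrix $L$ shrinks under composition.

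The main obstacle is exactly this last step: controlling how a non-unital, non-unitary extreme qubit filter $\psi_i$ interacts with the translation parts and the non-commuting rotations coming from the canonical forms of successive $\phi$'s. Unlike the unital case of Theorem~\ref{NU=mU unital q}, where everything collapses to Schatten norms of a single diagonal $L$, here the affine (Bloch) action means the composite channel's Choi matrix is a genuinely $3$-parameter object, and separability of a two-qubit state — while decidable via PPT~\cite{PeresPPT,HorodeckiPPT} — is not as transparent as~\eqref{EB q u}. I expect that a complete proof would need either (i) a clever use of the PPT criterion on the $2$-qubit Choi matrix of the full concatenation, reducing positivity of partial transpose to a scalar inequality in the singular values that is then handled by the same H\"older-type induction, or (ii) an entropic/majorization argument showing that any $\mathbf{CPt}_2$ filter is dominated, for the purpose of entanglement breaking, by a unitary one. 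Given that the authors only state this as a conjecture, I anticipate that they prove it only in special cases (e.g. when $\mathcal{N}_U(\phi)=\infty$, or for unital $\phi$, or under extra symmetry), which is presumably the content of Subsections~\ref{subsect proof infty} and~\ref{subsect hard proof}.
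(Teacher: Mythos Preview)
The statement is a \emph{conjecture}; the paper does not prove it. It only establishes two partial cases: Theorem~\ref{proof infty} (the three quantities coincide, and equal $\infty$, precisely when the image of the Bloch sphere contains a pure state and $\phi\notin\mathbf{EBt}_2$) and Theorem~\ref{hard proof} ($\mathcal{N}_U(\phi)=2\Rightarrow\mathcal{N}(\phi)=2$ for \emph{unital} $\phi$). You correctly anticipate this in your last paragraph, so in that sense your proposal is well calibrated.

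That said, there is a genuine gap already in what you call the ``easy'' half. The equality $\mathcal{N}_U(\phi)=\max\{\,n(\mathcal{U}\phi):\mathcal{U}\in\mathbf{U}_2\,\}$ is \emph{not} a free byproduct of regrouping plus~\eqref{NUC} and~\eqref{EB propagates}; it is itself part of the conjecture. The H\"older induction of Theorem~\ref{NU=mU unital q} works because for unital $\phi=(L,0)$ the composite $\phi\,\mathcal{U}_1\phi\cdots\mathcal{U}_{n-1}\phi$ is again unital with matrix part $LO_1L\cdots O_{n-1}L$, so entanglement-breaking reduces to the scalar condition $\|\,\cdot\,\|_1\le1$ via~\eqref{EB q u}. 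For non-unital $\phi=(M,c)$ the translation accumulates nontrivially through the intermediate rotations, the composite is non-unital, and~\eqref{EB q u} no longer characterizes the EB property; your assertion that ``the translation part only helps'' is exactly the unproved step. The paper never claims $\mathcal{N}_U(\phi)=\max_\mathcal{U} n(\mathcal{U}\phi)$ outside the unital case.

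For the hard direction $\mathcal{N}(\phi)\le\mathcal{N}_U(\phi)$, your extreme-point reduction and case analysis of two-Kraus extreme qubit channels is a reasonable line of attack, but the paper's partial result (Theorem~\ref{hard proof}) proceeds differently: it does not decompose the filter into extreme points at all. Instead it fixes a single arbitrary filter $\psi=(N,c)$, uses the sufficient separability criterion $\|M\|_1+|c|\le1$ of Proposition~\ref{TN cr}, and bounds $\|LNL\|_1+|Lc|$ via a bespoke inequality (Lemma~\ref{final}, proved through a majorization/rotation argument, Lemma~\ref{strano}) together with the H\"older step $\|LNL\|_1\le\|LN\|_2\|L\|_2$. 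This works only because $\mathcal{N}_U=2$ gives $\|L\|_2\le1$, and it does not obviously iterate to higher $\mathcal{N}_U$; the general case remains open in the paper as well.
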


Actually, Conjecture~\ref{UFO q} claims something more, that is, it gives an explicit algorithm to calculate $\mathcal{N}_U$ (and so $\mathcal{N}$) in terms of a single optimization over the set of unitary operations. This corresponds to say that the best unitary filtering strategy involves the iteration of a \emph{single} unitary filter.

\subsection{Divergent Filtered Indices for Qubit} \label{subsect proof infty}

An equality such as $\mathcal{N}_U=\mathcal{N}$, which is the heart of Conjecture~\ref{UFO q}, is explicitly violated in dimension $d\geq 3$. Example~\ref{cex UFO} shows that this violation can be dramatic, with $\mathcal{N}_U=2$ and $\mathcal{N}=\infty$. However, we will be able to show that \emph{such an extreme possibility can be ruled out} in the qubit case. It must be remarked that this constitutes only a partial (though encouraging) proof of the conjecture under examination. 

Actually, here we will present two different partial proofs. On one hand, the first result states that $\mathcal{N}(\phi)=\infty$ is possible only if also $\mathcal{N}_U(\phi)=\infty$, for all qubit channels $\phi$; on the other hand, the second result claims that $\mathcal{N}_U(\phi)=2$ implies $\mathcal{N}(\phi)=2$, for all unital qubit channels $\phi$. \\

\begin{thm}[Proof of Conjecture~\ref{UFO q} for Qubit Channels with~$\mathcal{N}_U=\infty$] \label{proof infty}
Let $\phi\in\mathbf{CPt}_2$ be a qubit channel. Then the following are equivalent.

\begin{enumerate}

\item $\max{\{\, n\,(\mathcal{U}\phi)\, : \ \mathcal{U}\in\mathbf{U}_2\, \}}\, =\, \infty$ .

\item $\mathcal{N}_U(\phi)=\infty$ .

\item $\mathcal{N}(\phi)=\infty$ .

\item The image of the Bloch sphere under the action of $\phi$ contains a pure state, and $\phi$ is not entanglement--breaking.

\end{enumerate}
\end{thm}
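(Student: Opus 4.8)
The plan is to prove the four statements equivalent by running the cycle $(4)\Rightarrow(1)\Rightarrow(2)\Rightarrow(3)\Rightarrow(4)$. Two of the steps are essentially free. For $(1)\Rightarrow(2)$ I would establish the inequality $\max\{n(\mathcal U\phi):\mathcal U\in\mathbf U_2\}\le\mathcal N_U(\phi)$: feeding the constant filter sequence $\mathcal U_1=\cdots=\mathcal U_{n-1}=\mathcal U$ into~\eqref{NU} shows that $\mathcal N_U(\phi)=N<\infty$ forces $\phi(\mathcal U\phi)^{N-1}\in\mathbf{EBt}$, and then~\eqref{EB propagates} absorbs the leftover unitary, $(\mathcal U\phi)^N=\mathcal U\,\phi(\mathcal U\phi)^{N-1}\in\mathbf{EBt}$, so $n(\mathcal U\phi)\le N$ for every $\mathcal U$. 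The implication $(2)\Rightarrow(3)$ is immediate from the elementary inequality~\eqref{elem ineq}. So the real content lies in $(4)\Rightarrow(1)$ and $(3)\Rightarrow(4)$.

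For $(4)\Rightarrow(1)$ I would first normalize: by hypothesis some pure state is mapped by $\phi$ to a pure state, so after left-multiplying $\phi$ by a suitable unitary $\mathcal W$ and conjugating by a fixed unitary — neither of which changes $n(\mathcal W\phi)$, by~\eqref{nUC} — one may assume the resulting channel $\psi$ \emph{fixes} $\Ket0\!\Bra0$; moreover $\psi$ is still not entanglement--breaking, since pre/post-composing a non-EB channel with unitaries keeps it non-EB. The structural point is that a qubit channel fixing $\Ket0\!\Bra0$ has all its Kraus operators upper--triangular in $\{\Ket0,\Ket1\}$, and a one-line computation (using the trace-preservation condition to kill the $\Ket0\!\Bra0$ component) gives $\psi(\Ket0\!\Bra1)=z(\psi)\,\Ket0\!\Bra1$ for a scalar $z(\psi)$, which is moreover multiplicative: $z(\psi_1\psi_2)=z(\psi_1)z(\psi_2)$ on the semigroup of such channels. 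Writing out the Choi matrix~\eqref{Fano} of $\psi$ in the basis $\{\Ket{00},\Ket{01},\Ket{10},\Ket{11}\}$ and running the PPT test, the only possible obstruction to $R_\psi^{T_B}\ge0$ is the $2\times2$ minor on $\mathrm{Span}\{\Ket{01},\Ket{10}\}$, whose determinant equals $-|z(\psi)|^2$; hence $\psi\in\mathbf{EBt}$ if and only if $z(\psi)=0$. Since $\psi\notin\mathbf{EBt}$ we get $z(\psi)\ne0$, so $z(\psi^n)=z(\psi)^n\ne0$ and therefore $\psi^n\notin\mathbf{EBt}$ for every $n$, i.e. $n(\psi)=n(\mathcal W\phi)=\infty$, which is $(1)$.

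For $(3)\Rightarrow(4)$ I would argue the contrapositive. If $\phi\in\mathbf{EBt}$ then $\mathcal N(\phi)=1<\infty$; otherwise assume the image of the Bloch sphere under $\phi=(M,c)$ (notation of~\eqref{Bloch repr}) contains no pure state. By compactness there is $\varepsilon>0$ with $|M\vec u+\vec c|\le 1-\varepsilon$ for every unit $\vec u$; averaging the images of $\vec u$ and $-\vec u$ then also gives $|M\vec u|\le 1-\varepsilon$, so $\|M\|_\infty\le 1-\varepsilon$, while convexity of the image places the whole Bloch--ball image inside the ball of radius $1-\varepsilon$. For arbitrary filters $\psi_1,\dots,\psi_{n-1}\in\mathbf{CPt}_2$ the composite $\xi_n=\phi\psi_1\phi\cdots\phi\psi_{n-1}\phi=(M_n,c_n)$ then satisfies $\|M_n\|_\infty\le(1-\varepsilon)^n$ (the matrix part of any qubit channel has operator norm $\le1$, again by the $\pm\vec u$ averaging, and there are $n$ factors of $M$) and $|c_n|\le1-\varepsilon$ (the outermost map is $\phi$). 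A quantitative \emph{threshold lemma} closes it: any qubit channel $(M,c)$ with $|c|\le1-\varepsilon$ and $\|M\|_\infty$ below some $\delta(\varepsilon)>0$ is entanglement--breaking, because its Choi matrix~\eqref{Fano} is a small perturbation of the product state $\tfrac{\mathds1+\vec c\cdot\vec\sigma}{2}\otimes\tfrac{\mathds1}{2}$, which together with its partial transpose is bounded below by $\tfrac{\varepsilon}{4}\mathds1$, so both $R$ and $R^{T_B}$ stay strictly positive and PPT $=$ separability in a $2\times2$ system gives EB. Picking $N$ with $(1-\varepsilon)^N\le\delta(\varepsilon)$ makes $\xi_n\in\mathbf{EBt}$ for all $n\ge N$ and \emph{all} filter choices, hence $\mathcal N(\phi)\le N<\infty$, contradicting $(3)$.

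The main obstacle is the $(3)\Rightarrow(4)$ direction, because there one must defeat \emph{every} non-unitary filtering strategy simultaneously; what makes it possible is the geometric upgrade from ``the image misses the sphere'' to a strict contraction $\|M\|_\infty<1$ that compounds multiplicatively through the composite no matter which (norm-non-expanding) filters are inserted, combined with the quantitative threshold lemma turning ``small matrix part, state bounded away from purity'' into membership in $\mathbf{EBt}$. The $(4)\Rightarrow(1)$ step is conceptually lighter once one notices that $\Ket0\!\Bra1$ is an eigenvector of any channel fixing $\Ket0\!\Bra0$, that the corresponding eigenvalue is exactly the obstruction to PPT, and that it is multiplicative under iteration.
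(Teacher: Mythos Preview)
Your proof is correct, and the cycle $(4)\Rightarrow(1)\Rightarrow(2)\Rightarrow(3)\Rightarrow(4)$ is the same as in the paper. The implications $(1)\Rightarrow(2)$, $(2)\Rightarrow(3)$, and $(4)\Rightarrow(1)$ are handled essentially as the paper does (your observation that $z(\psi)$ is a multiplicative eigenvalue on the semigroup of channels fixing $\Ket{0}\!\Bra{0}$ is a clean way to phrase the iteration the paper carries out by hand). One cosmetic point: when you write ``averaging the images of $\vec u$ and $-\vec u$'' you presumably mean taking half their \emph{difference}, since the average yields $\vec c$ rather than $M\vec u$; the inequality $|M\vec u|\le1-\varepsilon$ is of course still correct by the triangle inequality.

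The one place where your route genuinely diverges from the paper is $(3)\Rightarrow(4)$. The paper argues \emph{softly}: assuming $\|M\|_\infty<1$, it extracts from the hypothesis $\mathcal N(\phi)=\infty$ a sequence of filter compositions, passes to a limit point $\chi=(0,s)$ by compactness of $\mathbf{CPt}_2$, and then observes that the Choi state of $\phi\chi$ is a product state sitting on the boundary of the separable set, which (invoking a geometric fact about that boundary) forces a pure state in the image of $\phi$. You argue \emph{quantitatively}: from ``image misses the sphere'' you upgrade to a strict contraction $\|M\|_\infty\le1-\varepsilon$, propagate this multiplicatively through any filter string (since every qubit channel has Bloch matrix of operator norm at most $1$), and close with a perturbation lemma showing that $(M_n,c_n)$ with $|c_n|\le1-\varepsilon$ and $\|M_n\|_\infty$ small is a small perturbation of a product Choi state whose partial transpose is bounded below by $\tfrac{\varepsilon}{4}\mathds1$, hence PPT and separable. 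Both arguments are valid; yours is more elementary and, as a bonus, produces an explicit finite upper bound $\mathcal N(\phi)\le N(\varepsilon)$ depending only on the distance of the image of $\phi$ from the unit sphere, something the paper's compactness argument does not directly give.
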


\begin{proof} $\\$
\begin{description}

\item[$1 \Rightarrow 2$ :] From the very Definition~\ref{EBi} it follows that
\begin{gather*}
\mathcal{N}_U(\phi)\ =\ \min{\{\,n\geq 1:\ \forall \ \mathcal{U}_1,\ldots,\mathcal{U}_{n-1} \in \mathbf{U}_2,}\\
\phi\,\mathcal{U}_1\phi\ldots\phi\,\mathcal{U}_{n-1}\phi \in \mathbf{EBt}_2\, \}\ \geq\\
\geq\ \min{\{\,n\geq 1:\ \forall \ \mathcal{U} \in \mathbf{U}_2, \underbrace{\phi\,\mathcal{U}\phi\ldots\phi\,\mathcal{U}\phi}_{\text{$\phi$ repeated $n$ times}} \in \mathbf{EBt}_2\, \}}\ =\\
=\ \min{\{\,n\geq 1:\ \forall \ \mathcal{U} \in \mathbf{U}_2,\ (\mathcal{U}\phi)^n \in \mathbf{EBt}_2\, \}}\ =\\
=\ \max{\{\, n\,(\mathcal{U}\phi)\, : \ \mathcal{U}\in\mathbf{U}_2\, \}} \ . 
\end{gather*}
Therefore, $\max{\{\, n\,(\mathcal{U}\phi)\, : \ \mathcal{U}\in\mathbf{U}_2\, \}}\, =\, \infty$ directly implies $\mathcal{N}_U(\phi)=\infty$.

\item[$2 \Rightarrow 3$ :] This implication follows from~\eqref{elem ineq}.

\item[$3 \Rightarrow 4$ :] Obviously, if $\mathcal{N}(\phi)=\infty$ then $\phi$ can not be entanglement--breaking. It remains to show that the image of the Bloch sphere through $\phi$ always contains a pure state. In what follows, we will denote by $(M,c)$ the Bloch sphere representation~\eqref{Bloch repr} of $\phi$.

If $\|M\|_\infty=1$ we can immediately conclude, because of the following reasoning. Take an unit vector $\vec{r}$ such that $|M\vec{r}|=\|M\|_\infty=1$; since $\phi=(M,c)$ is a positive map, it must be $|M\vec{r}\pm c|\leq 1$, and so
\begin{multline} 1\ \geq\  \frac{1}{2}\,\left(\, |M\vec{r} + \vec{c}|^2 + |M\vec{r} - \vec{c}|^2\, \right)\ =\\
=\ \|M\|_\infty^2 +\, |\vec{c}|^2\ . \label{infty 1} \end{multline}
If $\|M\|_\infty=1$,~\eqref{infty 1} implies that $\vec{c}=0$, and so that 
\begin{equation*} \phi\,\left(\,\frac{\mathds{1}+\vec{r}\cdot\vec{\sigma}}{2}\,\right)\ =\ \frac{\mathds{1}+(M\vec{r})\cdot\vec{\sigma}}{2} \end{equation*}
is a pure state.

Now, let us suppose $\|M\|_\infty<1$. By hypothesis,
\begin{multline}
\forall\, n\geq 1\, ,\ \ \exists\ \psi_1^{(n)} ,\ldots, \psi_{n-1}^{(n)}\, \in \mathbf{CPt}_2\ :\\
\phi\psi_1^{(n)}\phi\ldots\phi\psi_{n-1}^{(n)}\phi \,\notin\, \mathbf{EBt}_2 . \label{infty 2}
\end{multline}
Consider the sequence of qubit channels \mbox{$\left( \psi_1^{(n)}\phi\ldots\psi_{n-1}^{(n)}\phi \right)_{n\geq 1}$}; since its elements belong to the compact set $\mathbf{CPt}_2$, it must admit a limit point $\chi\in\mathbf{CPt}_2$. Moreover, $\chi$ must be of the form $\chi=(0,s)$ (with $|s|\leq 1$), because of the assumption $\|M\|_\infty<1$. In fact, using the notation $\psi_i^{(n)}=(N_i^{(n)},\ldots)$, we have
\begin{equation*} \psi_1^{(n)}\phi\ldots\psi_{n-1}^{(n)}\phi\ =\ (\,N_1^{(n)} M \ldots N_{n-1}^{(n)} M\,,\ \ldots\,)\ , \end{equation*}
and
\begin{equation*} \|\,N_1^{(n)} M \ldots N_{n-1}^{(n)} M\, \|_\infty\ \leq\ \|M\|_\infty^{n-1} \xrightarrow[n\rightarrow\infty]{} 0 . \end{equation*}
Observe that we have used the bound \mbox{$\|N_i^{(n)}\|_\infty\leq 1$}, which descends directly from the analogous of~\eqref{infty 1}.

By the very definition of $\chi$, it follows that the sequence of non--entanglement--breaking channels \mbox{$\left( \phi\psi_1^{(n)}\phi\ldots\psi_{n-1}^{(n)}\phi \right)_{n\geq 1}$} admits the limit point $\phi\chi=(0,\,Ms+c)\in\mathbf{CPt}_2$. Its Choi state
\begin{gather*} R_{\phi\chi}\ =\ (\phi\chi \otimes I)(\Ket{\varepsilon}\!\!\Bra{\varepsilon})\ =\\
=\ \frac{\mathds{1}+(M\vec{s}+\vec{c})\cdot\vec{\sigma}}{2}\,\otimes\,\frac{\mathds{1}}{2}\ =\ \phi\,\left( \frac{\mathds{1}+\vec{s}\cdot\vec{\sigma}}{2} \right)\,\otimes\,\frac{\mathds{1}}{2} \end{gather*}
must belong to the \emph{boundary} of the set of separable states, otherwise $\phi\psi_1^{(n)}\phi\ldots\psi_{n-1}^{(n)}\phi$ would be entanglement--breaking for sufficiently large $n$. This means that $R_{\phi\chi}=R_{\phi\chi}^{T_B}$ can not be \emph{strictly} positive definite (see equation~(15.65) of~\cite{GeometryQuantum}). Instead, it must have at least a zero eigenvalue, that is, $\phi\,\left(\, (\mathds{1}+\vec{s}\cdot\vec{\sigma})/2\, \right)$ must be a pure state. In conclusion, we have found a pure state in the image of the Bloch sphere through $\phi$.

\item[$4 \Rightarrow 1$ :] We can suppose without loss of generality that $\phi\,(\Ket{0}\!\!\Bra{0})$ is a pure state. Consider an unitary conjugation $\,\mathcal{U}\in\mathbf{U}_2$ such that $\mathcal{U}\phi\,(\Ket{0}\!\!\Bra{0})=\Ket{0}\!\!\Bra{0}$; we will prove that $n\,(\mathcal{U}\phi)=\infty$.

Let $\{ M_k \}_k$ be a Kraus representation of the qubit channel $\,\mathcal{U}\phi$. The equality $\mathcal{U}\phi\,(\Ket{0}\!\!\Bra{0})=\Ket{0}\!\!\Bra{0}$ is possible if and only if $M_k\Ket{0}\propto\Ket{0}$ for all $k$. Then there exists a pure state $\Ket{\alpha}$ such that $\mathcal{U}\phi\,(\Ket{0}\!\!\Bra{1})\propto\Ket{0}\!\!\Bra{\alpha}$; because of the trace--preserving property, we can chose $\Ket{\alpha}=\Ket{1}$, so that $\mathcal{U}\phi\,(\Ket{0}\!\!\Bra{1})= z \Ket{0}\!\!\Bra{1}$ for some $z\in\mathds{C}$. Taking the hermitian conjugate, we obtain also $\mathcal{U}\phi\,(\Ket{1}\!\!\Bra{0}) = z^* \Ket{1}\!\!\Bra{0}$. Observe that it must be $z\neq 0$, otherwise we could write for all $2\times 2$ matrices $X$
\begin{equation*} \mathcal{U}\phi(X)\, =\, \Ket{0}\!\!\Bra{0}\, \Braket{0|X|0}\, +\, \mathcal{U}\phi(\Ket{1}\!\!\Bra{1})\, \Braket{1|X|1}\ , \end{equation*}
and $\,\mathcal{U}\phi$ (that is, $\phi$) would be entanglement--breaking, by comparison with~\eqref{Holevo form}. This is explicitly forbidden by hypothesis, and we must conclude that $z\neq 0$. By iteration, the following equalities are valid for all $n\geq 1$.
\begin{gather*} (\mathcal{U}\phi)^n\,(\Ket{0}\!\!\Bra{0})\,=\,\Ket{0}\!\!\Bra{0} \\
(\mathcal{U}\phi)^n\,(\Ket{0}\!\!\Bra{1})\, =\, z^n \Ket{0}\!\!\Bra{1} \\
(\mathcal{U}\phi)^n\,(\Ket{1}\!\!\Bra{0})\, =\, (z^*)^n \Ket{1}\!\!\Bra{0}\, . \end{gather*}
As a consequence, representing $(I\otimes (\mathcal{U}\phi)^n)\,(\Ket{\varepsilon}\!\!\Bra{\varepsilon})$ in the lexicographically ordered basis $\{\, \Ket{00},\, \Ket{01},\, \Ket{10},\, \Ket{11}\,\}$ we obtain
\begin{equation} (I\otimes (\mathcal{U}\phi)^n)\,(\Ket{\varepsilon}\!\!\Bra{\varepsilon})\ =\ \begin{pmatrix} 1 & 0 & 0 & z \\ 0 & 0 & 0 & 0 \\ 0 & 0 & a_n & b_n \\ z^* & 0 & b_n^* & 1-a_n \end{pmatrix}\, . \label{infty 3} \end{equation}
Since $z\neq 0$, the application of the PPT criterion reveals that~\eqref{infty 3} is an entangled state. Therefore, $(\mathcal{U}\phi)^n$ is \emph{not} entanglement--breaking for all $n\geq 1$, that is, $n\,(\mathcal{U}\phi)=\infty$.

\end{description}
\end{proof}

While the proof of Theorem~\ref{proof infty} is rather complicated, its meaning is pretty much clear. Specifically, at variance with the higher dimensional case, for qubit channels it is impossible to have  $\mathcal{N}_U=2$ while \mbox{$\mathcal{N}=\infty$} (see Example~\ref{cex UFO}): instead if one filtered index reaches $\infty$, then the same happens to the other.

\subsection{Conjecture~\ref{UFO q} for Qubit: Simple Unital Case} \label{subsect hard proof}

The analysis presented in the previous section is clearly not sufficient to prove that  Conjecture~\ref{UFO q} is true (for instance, it is still possible that there exist counterexamples satisfying  $\mathcal{N}_U=2$ but $3\leq\mathcal{N}<\infty$). The restriction $\mathcal{N}_U=2$ is reasonable, because it confines our analysis to the first nontrivial case. However, we will show that at least \emph{for unital qubit channels, if $\mathcal{N}_U=2$, then also $\mathcal{N}=2$}. As a consequence, there exists no unital counterexample to Conjecture~\ref{UFO q} which satisfies the restriction $\mathcal{N}_U=2$, as Example~\ref{cex UFO} did in the case of higher dimensions.

To prove  this result we need a series of  preliminary lemmas. In particular since we are looking for a \emph{upper bound} on $\mathcal{N}$, by the very definition~\eqref{N} we will have to prove that a certain sequence of channels is entanglement--breaking. For this reason, the first task is to formalize a \emph{sufficient} separability criterion which is capable to detect the \emph{absence} of entanglement between two--dimensional systems. We use here a particular case of Proposition 3 in~\cite{J}. For the sake of clearness a simple proof is also given.

\begin{prop} \label{TN cr} $\\$
With the notation of~\eqref{Bloch repr} and~\eqref{Fano}, one has
\begin{equation} \|M\|_1 + |c|\, \leq\, 1 \quad\Rightarrow\quad (M,c) \in \mathbf{EBt}_2\ . \label{TN cr} \end{equation}
\end{prop}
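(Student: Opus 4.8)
The plan is to show that whenever $\|M\|_1+|c|\le 1$ the Choi state $R_{(M,c)}$ given in~\eqref{Fano} is separable, by exhibiting an explicit convex decomposition into product states. The natural idea is to write the channel itself in Holevo form~\eqref{Holevo form}, i.e. as a measure-and-reprepare map, since producing such a form is equivalent to separability of the Choi state. Working in the Bloch picture, a measurement with POVM elements $E_i = \tfrac{1}{2}(\mathds{1}+\vec{m}_i\cdot\vec\sigma)\,w_i$ (with $w_i\ge 0$, $\sum_i w_i\vec m_i = 0$, $\sum_i w_i = 1$, $|\vec m_i|\le 1$) followed by repreparation of $\rho_i=\tfrac12(\mathds{1}+\vec r_i\cdot\vec\sigma)$ yields the channel $\vec r\mapsto \sum_i w_i\,\vec r_i\,(1+\vec m_i\cdot\vec r)$, so I must realize the affine map $\vec r\mapsto M\vec r+\vec c$ in this way with all the $\vec m_i,\vec r_i$ of norm $\le 1$.

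The key computational step is a clean choice of measurement directions. First I would reduce to $M=\mathrm{diag}(l_1,l_2,l_3)$ diagonal by the unitary conjugation invariance~\eqref{NUC}/\eqref{nUC} of the entanglement-breaking property (absorbing the two $\mathrm{SO}(3)$ rotations into local unitaries, which map $\mathbf{EBt}_2$ to itself), so that $\|M\|_1=|l_1|+|l_2|+|l_3|$. Then I would build the measurement from the six signed coordinate directions: take $E_{\pm k}=\tfrac12(\mathds{1}\pm\sigma_k)\,\tfrac{|l_k|}{2}$ for $k=1,2,3$, which uses total weight $\sum_k|l_k|=\|M\|_1\le 1$, plus one ``trivial'' outcome $E_0=(1-\|M\|_1)\mathds{1}$ carrying the leftover weight; these sum to $\mathds{1}$ and are manifestly positive. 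For outcome $(\pm k)$ reprepare $\rho_{\pm k}=\tfrac12(\mathds{1}\pm\mathrm{sgn}(l_k)\,\sigma_k)$, and for outcome $0$ reprepare $\tfrac12(\mathds{1}+\tfrac{1}{1-\|M\|_1}\vec c\cdot\vec\sigma)$ — which is a legitimate state precisely because $|\vec c|\le 1-\|M\|_1$ is our hypothesis. A short check shows this measure-and-reprepare channel sends $\vec r\mapsto \sum_k |l_k|\,\mathrm{sgn}(l_k)\,r_k\,\hat e_k + \vec c = M\vec r+\vec c$, as desired, so $(M,c)$ has a Holevo form and hence lies in $\mathbf{EBt}_2$.

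The main obstacle is getting the bookkeeping exactly right: one must verify simultaneously that the $E_i$ form a valid POVM (positivity of $E_0$ needs $\|M\|_1\le 1$, which follows from the hypothesis since $|c|\ge 0$), that each reprepared $\rho_i$ is a genuine density matrix (the only nontrivial one being $\rho_0$, which needs $|\vec c|\le 1-\|M\|_1$, exactly the hypothesis), and that the induced affine action reproduces $(M,c)$ including the inhomogeneous term $\vec c$ contributed solely by the $E_0$ outcome (the $\pm k$ pairs contribute zero to the constant term because $E_{+k}$ and $E_{-k}$ reprepare the same state with opposite-sign measurement contributions that cancel at $\vec r=0$ — wait, one should instead note $E_{+k}$ reprepares $\rho_{+k}$ and $E_{-k}$ reprepares $\rho_{-k}=-\rho$-flipped, so the constant part is $\tfrac{|l_k|}{2}(\vec r_{+k}+\vec r_{-k})=0$). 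Once these three consistency conditions are checked, the decomposition $\phi(X)=\sum_i\rho_i\,\mathrm{Tr}[E_iX]$ is the required Holevo form, and the characterization theorem of~\cite{HorodeckiShorRuskai} quoted in Subsection~\ref{subsec EB} gives $(M,c)\in\mathbf{EBt}_2$ immediately. Alternatively, and perhaps more transparently, one can skip the reduction to diagonal form and argue directly at the level of the Choi matrix: expand $R_{(M,c)}$ using the singular value decomposition and write it as a convex combination of the obviously separable states $\tfrac12(\mathds{1}\pm\vec u\cdot\vec\sigma)\otimes\tfrac12(\mathds{1}\pm\vec v\cdot\vec\sigma)$ plus the term $\tfrac12(\mathds{1}+\vec c\cdot\vec\sigma)\otimes\tfrac{\mathds{1}}{2}$ weighted by $1-\|M\|_1$ — but I expect the Holevo-form route to be the shortest to write cleanly.
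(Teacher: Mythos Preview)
Your approach is sound and is a genuinely different route from the paper's. The paper does not build a Holevo form at all: after reducing to the canonical diagonal pair $(L,t)$ it writes the partial transpose of the Choi state as in~\eqref{TN cr 1} and bounds
\[
\Big\|\,(\vec t\cdot\vec\sigma)\otimes\mathds{1}+\sum_i l_i\,\sigma_i\otimes\sigma_i\,\Big\|_\infty\ \le\ |t|+\sum_i|l_i|\ =\ |c|+\|M\|_1\ \le\ 1
\]
by the triangle inequality, so $R_{(L,t)}^{T_B}\ge 0$ and separability follows from the \emph{sufficiency} of PPT in $2\times 2$. Your argument is more constructive: it produces the measure--and--reprepare decomposition explicitly, which has the merit of making the hypothesis $|c|\le 1-\|M\|_1$ appear exactly as the condition that the leftover repreparation $\rho_0$ be a legitimate state, and of not relying on the PPT$\Rightarrow$separable implication. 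The paper's proof, on the other hand, is a couple of lines once the Choi state is written down.

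One normalisation slip to fix (you anticipated this as the main obstacle): with $E_{\pm k}=\tfrac12(\mathds{1}\pm\sigma_k)\,\tfrac{|l_k|}{2}$ and $E_0=(1-\|L\|_1)\mathds{1}$ as written, the sum is $\big(1-\tfrac12\|L\|_1\big)\mathds{1}\neq\mathds{1}$, and the resulting map reproduces $\tfrac12 L\vec r$ rather than $L\vec r$ in the linear part. The correct choice is $E_{\pm k}=\tfrac{|l_k|}{2}(\mathds{1}\pm\sigma_k)$ (twice yours), keeping $E_0=(1-\|L\|_1)\mathds{1}$; then $\sum_i E_i=\mathds{1}$ and the induced affine map is exactly $\vec r\mapsto L\vec r+\vec t$. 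With this single adjustment your proof goes through verbatim. (Also note that after the canonical-form reduction the translation vector is $t=O_1^T c$, not $c$ itself; since $|t|=|c|$ this does not affect the hypothesis, but $\rho_0$ should carry $\vec t$.)
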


\begin{proof}
By applying unitary evolutions to the left and to the right, we can suppose $(M,c)$ reduced in canonical form $(L,t)$ (see equations~\eqref{q canonical form} and~\eqref{k}).  Then, we have only to prove that $R_{(L,t)}$ is separable if $\|L\|_1+|t|\leq 1$. Write the partial transpose of $R_{(L,t)}$ as
\begin{equation} R_{(L,t)}^{T_B}\, =\, \frac{1}{4}\, \left(\, \mathds{1} + (\vec{t}\cdot\vec{\sigma})\otimes \mathds{1}\, +\, \sum_{i=1}^3 l_i\, \sigma_i \otimes \sigma_i\, \right)\ . \label{TN cr 1}\end{equation}
If we could demonstrate that $R_{(L,t)}^{T_B}\geq 0$, then the PPT condition for separability in $2\times 2$ systems would conclude the proof. This will be proved by showing that
\begin{equation*} \left\|\, (\vec{t}\cdot\vec{\sigma})\otimes \mathds{1}\, +\, \sum_{i=1}^3 l_i\, \sigma_i \otimes \sigma_i\, \right\|_\infty\ \leq\ 1\ . \end{equation*}
Firstly, observe that the following elementary equalities hold:
\begin{equation*} \|A\otimes B\|_\infty=\|A\|_\infty\,\|B\|_\infty\, ,\quad \|n\cdot\vec{\sigma}\|_\infty=|n|\, . \end{equation*}
Then, thanks to the triangular inequality, we have
\begin{gather*} \left\|\, (\vec{t}\cdot\vec{\sigma})\otimes \mathds{1}\, +\, \sum_{i=1}^3 l_i\, \sigma_i \otimes \sigma_i^T\, \right\|_\infty\ \leq\\
\leq\ \|\, (\vec{t}\cdot\vec{\sigma})\otimes \mathds{1}\,\|_\infty\, +\, \sum_{i=1}^3 |l_i|\, \| \sigma_i \otimes \sigma_i^T\|_\infty \ =\\
=\ \|\, \vec{t}\cdot\vec{\sigma}\,\|_\infty\, +\, \sum_{i=1}^3 |l_i|\, \| \sigma_i\|_\infty\, \| \sigma_i^T\|_\infty\ =\\
=\ |t|\, +\, \sum_{i=1}^3 |l_i|\ =\ |t|\, +\, \|L\|_1\ =\ |c|\, +\, \|M\|_1\ \leq\ 1\ . \end{gather*}
\end{proof}

Another technical result that will be useful through the rest of the section is the following. \\

\begin{lemma} \label{strano} $\\$
Given a vector $v\in\mathds{R}^n$, let us denote by $\left[ v \right]\in \mathds{R}^n$ the vector obtained by taking the absolute value of the components of $v$, i.e. \mbox{$\left[ v\right]_i\equiv |v_i|$}. We claim that for each $v\in\mathds{R}^n$ and for each $n \times n$ real matrix $A$, there exists a special orthogonal matrix $O\in SO(n)$ such that the two vectors
\begin{equation*} \left[ Ov \right]\ ,\quad \left(\ |(OA)_1|\ , \ldots,\ |(OA)_n|\ \right) \end{equation*}
are linearly dependent. We denote by the symbol $M_i$ the $i$th row of the matrix $M$, as usual.
\end{lemma}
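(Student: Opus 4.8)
The idea is to pass to the nondegenerate case and then recast the statement as a coincidence problem on the compact group $SO(n)$. If $v=0$ or $A=0$ one of the two vectors vanishes and the claim is trivial, so assume $v\neq0$, $A\neq0$; rescaling $v$ and $A$ (which affects neither hypothesis nor conclusion) we may take $|v|=1$ and $\|A\|_2=1$, with $\|\cdot\|_2$ the Hilbert--Schmidt norm. For $O\in SO(n)$ put
\[ p(O)\equiv\big((Ov)_1^2,\dots,(Ov)_n^2\big)\ ,\qquad q(O)\equiv\big(|(OA)_1|^2,\dots,|(OA)_n|^2\big)\ . \]
Then $\sum_i p_i(O)=|Ov|^2=1$ and $\sum_i q_i(O)=\Tr(OAA^TO^T)=1$ for every $O$. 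Since all entries of $[Ov]$ and of $\big(|(OA)_i|\big)_i$ are nonnegative, these two vectors are linearly dependent if and only if their componentwise squares $p(O)$ and $q(O)$ are, and (the two having equal coordinate sum) this holds if and only if $p(O)=q(O)$. So it suffices to produce $O\in SO(n)$ with $p(O)=q(O)$; note that, unlike $[Ov]$, the maps $p,q\colon SO(n)\to\mathbb{R}^n$ are \emph{smooth}.

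For $n=2$ this follows at once by the intermediate value theorem on $SO(2)\cong S^1$: writing $O_\theta$ for the rotation by $\theta$, the single $2\times2$ minor $m(\theta)\equiv[O_\theta v]_1\,|(O_\theta A)_2|-[O_\theta v]_2\,|(O_\theta A)_1|$ is continuous, is $\ge0$ at the $\theta$ for which $O_\theta v$ is a positive multiple of $e_1$ (so $[O_\theta v]_2=0$) and is $\le0$ at the $\theta$ for which $O_\theta v$ is a positive multiple of $e_2$; hence $m$ vanishes somewhere, and there $[O_\theta v]$ and $\big(|(O_\theta A)_i|\big)_i$ are linearly dependent.

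For general $n$ the plan is variational: minimise the smooth function $\delta(O)\equiv\sum_i\big(p_i(O)-q_i(O)\big)^2$ over the compact group $SO(n)$ and show that its minimum is $0$. Differentiating along the curves $t\mapsto e^{tX}O^*$, $X$ skew--symmetric, at a minimiser $O^*$, and using $\tfrac{d}{dt}\big|_0(e^{tX}w)_i^2=[X,ww^T]_{ii}$ and $\tfrac{d}{dt}\big|_0(e^{tX}Ge^{-tX})_{ii}=[X,G]_{ii}$ with $w\equiv O^*v$ and $G\equiv O^*AA^TO^{*T}$, one finds
\[ \frac{d}{dt}\Big|_0\delta\big(e^{tX}O^*\big)\,=\,2\,\Tr\!\big(X\,[H,D]\big)\ ,\qquad H\equiv ww^T-G\ ,\quad D\equiv\mathrm{diag}\big(p(O^*)-q(O^*)\big)\ ; \]
as $[H,D]$ is skew--symmetric, stationarity forces $[H,D]=0$, i.e. $H$ is block diagonal along the level sets of the vector $p(O^*)-q(O^*)$. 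If $p(O^*)\neq q(O^*)$ then $D\neq0$ while $\Tr D=0$, so there is a pair $i,j$ of indices on which $p-q$ takes distinct values, and the goal is to compute the second variation of $\delta$ along $t\mapsto e^{t(E_{ij}-E_{ji})}O^*$ (an elementary rotation in the $(i,j)$--plane) and exhibit it as strictly negative, contradicting the minimality of $O^*$. I expect this second--order step — equivalently, ruling out every critical point of $\delta$ with $p\neq q$ as a candidate minimiser — to be the main obstacle; passing to the squared quantities has already removed the only nonsmoothness (at the strata where some $p_i$ or $q_i$ vanishes), so what is left is a finite but somewhat delicate algebraic computation. A possible substitute is a degree/Borsuk--Ulam--type topological argument for the map $O\mapsto p(O)-q(O)$ into the hyperplane $\{\sum_i x_i=0\}$, bootstrapped from the $n=2$ case via the coordinate $SO(2)$--subgroups of $SO(n)$.
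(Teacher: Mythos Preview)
Your reduction is exactly the one the paper makes: after normalising to $|v|=1=\|A\|_2$ and passing to componentwise squares, the problem is to find $O$ with $p(O)=q(O)$, i.e.\ with $\big(O(vv^T-AA^T)O^T\big)_{ii}=0$ for all $i$. Your $n=2$ case via the intermediate value theorem is fine.

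The gap is in the general step. You set up the variational scheme correctly --- the first--order condition $[H,D]=0$ with $H=O^*(vv^T-AA^T)O^{*T}$ and $D=\mathrm{diag}(H)$ is right --- but you do not carry out the second--order analysis, and you yourself flag it as ``the main obstacle''. As written the argument is incomplete: $[H,D]=0$ only says that $H$ is block--diagonal along the level sets of its own diagonal, and nothing yet forces $D=0$. The alternative topological route you mention is also left undeveloped. So the proposal stops short of a proof.

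What you are trying to establish at this point is precisely the Horn direction of the Schur--Horn theorem: a real symmetric matrix $S$ admits an orthogonal conjugate with diagonal $d$ if and only if $d$ is majorised by the spectrum of $S$. Since $\mathrm{Tr}(vv^T-AA^T)=|v|^2-\|A\|_2^2=0$, the zero vector is majorised by the spectrum, and the desired $O$ exists. This is exactly how the paper closes the argument (citing Horn--Johnson, Theorems~4.3.26 and~4.3.32), after the same reduction you made; it also notes that one may first find $O\in O(n)$ and then flip the sign of one row if needed to land in $SO(n)$, which is harmless because both $|(Ov)_i|$ and $|(OA)_i|$ are insensitive to row sign changes. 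Your variational programme would, if completed, amount to re--proving this instance of Horn's theorem; that can be done, but it is considerably more work than invoking the theorem.
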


\begin{proof}
In what follows we will use the notation $\sigma(X)$ to indicate the spectrum of the matrix $X$, with each eigenvalue repeated a number of times equal to its multiplicity. Now, note immediately that we are free to suppose $v,A\neq 0$, and so also (up to a rescaling constant)
\begin{equation} |v|=1=\|A\|_2\ . \label{strano eq1}\end{equation}
Now, we prove that there exists $O\in\text{SO}(n)$ such that
\begin{equation}  |(Ov)_i|\,\equiv\, |(OA)_i|\qquad \forall\ 1\leq i\leq n\ . \label{strano eq2}\end{equation}
Actually, we can freely extend the range of $O$ to all the orthogonal matrices, not necessarily with determinant equal to $+1$. Indeed, if we find an $O\in O(n)$ such that $\det O=-1$ and~\eqref{strano eq2} is satisfied, changing the sign of the first row of $O$ produces a special orthogonal matrix $O'\in SO(n)$ which verifies again $|(O'v)_i|\equiv |(O'A)_i|$.
Now, one can square~\eqref{strano eq2}, obtaining the requirement that
\begin{equation} \exists\ O\in O(n)\ :\quad \left(\,O\ (vv^T - AA^T)\ O^T\, \right)_{ii}\equiv\, 0\ . \label{strano eq3} \end{equation}
We can suppose without loss of generality that \mbox{$vv^T - AA^T$} (which is a symmetric matrix) is diagonal. Indeed, the spectral theorem guarantees that it can be diagonalized by means of an orthogonal transformation. Thus, \mbox{$vv^T - AA^T$} can be taken diagonal up to a change of variables in $O(n)$. Therefore, the set of matrices of the form $O\, (vv^T - AA^T)\, O^T$ is composed of all the symmetric matrices $S$ with spectrum $\sigma(vv^T-AA^T)$. We have to prove that at least one of them has all of the diagonal entries equal to zero. Now, we invoke Theorems~4.3.26 (p. 193) and~4.3.32 (p. 196) of~\cite{HJ1}. Their content is precisely that the condition we are looking for can be satisfied if and only if
\begin{equation*} \sigma(vv^T - AA^T)\ \prec\ \{ 0 \}\ , \end{equation*}
where the symbol $\prec$ represents the relation of \emph{majorization}. A vector $q\in\mathds{R}^n$ is said to majorize another vector $p\in\mathds{R}^n$ (and we write $p\prec q$) if the following relations are satisfied:
\begin{equation} \sum_{i=1}^k p_i^\uparrow\, \leq\, \sum_{i=1}^k q_i^\uparrow\quad \forall\ k=1,\ldots,n\, ,\qquad \sum_{i=1}^n p_i^\uparrow\, =\, \sum_{i=1}^n q_i^\uparrow\, , \label{eq majorization} \end{equation}
where $p_i^\uparrow$ is the vector obtained from $p$ by sorting its entries in ascending order.
This requirement is satisfied precisely because 
\begin{equation*} \text{Tr}\,[\,vv^T - AA^T]\ =\ |v|^2 - \|A\|_2^2\ =\ 1-1\ =\ 0\ , \end{equation*}
where we used~\eqref{strano eq1}. Therefore, we can conclude.
\end{proof}

Now, let us present the last mathematical lemma. In what follows, $|\cdot|$ stands for the absolute value of a real number or for the Euclidean norm of a $3$--vector (row or column).

\begin{lemma} \label{final} $\\$
Let $(M,c)\in\mathbf{Pt}_2$ be a (not necessarily completely) positive, trace--preserving qubit map. Then, for all $3\times 3$ real matrices $K$, we have
\begin{equation} |Kc|\ +\ \|KM\|_2\ \leq\ \|K \|_2\ . \label{final} \end{equation}
\end{lemma}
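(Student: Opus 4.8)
The plan is to exploit the invariance of \eqref{final} under left multiplication $K\mapsto OK$ by an orthogonal matrix, and then to reduce the statement to a Cauchy--Schwarz estimate whose equality case is supplied by Lemma~\ref{strano}. Concretely, since the Frobenius norm $\|\cdot\|_2$ and the Euclidean norm $|\cdot|$ are invariant under $A\mapsto OA$ for $O$ orthogonal, one has $\|OKM\|_2=\|KM\|_2$, $\|OK\|_2=\|K\|_2$ and $|OKc|=|Kc|$ for every $O\in\text{SO}(3)$; hence it suffices to prove \eqref{final} with $OK$ in place of $K$ for one suitably chosen $O$.

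First I would translate the positivity hypothesis $(M,c)\in\mathbf{Pt}_2$. Since such a map sends the Bloch ball into itself, $|Mr+c|\le1$ for all $|r|\le1$; evaluating the support function of this ellipsoid in a unit direction $u$ and then doing the same for $-u$ yields $|M^Tu|+|u^Tc|\le1$ for every unit $u$, hence $|M^Tk|+|k^Tc|\le|k|$ for every $k\in\mathbb{R}^3$. Writing $K$ by its rows $k_1^T,k_2^T,k_3^T$, one has $\|K\|_2^2=\sum_i|k_i|^2$, $\|KM\|_2^2=\sum_i|M^Tk_i|^2$ and $|Kc|^2=\sum_i(k_i^Tc)^2$; applying the bound to each $k_i$, squaring and summing gives
\begin{equation*}
\|KM\|_2^2+|Kc|^2+2\sum_{i}|M^Tk_i|\,|k_i^Tc|\ \le\ \|K\|_2^2\ .
\end{equation*}
Since \eqref{final} is equivalent to $|Kc|^2+2|Kc|\,\|KM\|_2+\|KM\|_2^2\le\|K\|_2^2$, it remains only to establish the reverse-Cauchy--Schwarz inequality
\begin{equation*}
|Kc|\,\|KM\|_2\ \le\ \sum_{i}|M^Tk_i|\,|k_i^Tc|\ .
\end{equation*}

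Cauchy--Schwarz always gives the opposite bound $\sum_i|M^Tk_i|\,|k_i^Tc|\le\big(\sum_i|M^Tk_i|^2\big)^{1/2}\big(\sum_i(k_i^Tc)^2\big)^{1/2}=\|KM\|_2\,|Kc|$, so what is needed is precisely its equality case, i.e.\ proportionality of the vectors $\big(|M^Tk_i|\big)_i$ and $\big(|k_i^Tc|\big)_i$. This is where the rotation freedom enters: applying Lemma~\ref{strano} with $v=Kc$ and $A=KM$ produces $O\in\text{SO}(3)$ such that $[O(Kc)]$ and the vector of Euclidean row-norms of $OKM$ are linearly dependent. Setting $K'=OK$ with rows $k_1'^T,k_2'^T,k_3'^T$, these two vectors are exactly $\big(|k_i'^Tc|\big)_i$ and $\big(|M^Tk_i'|\big)_i$; since nonnegative linearly dependent vectors are proportional (or one is zero), Cauchy--Schwarz for $K'$ is an equality, so the displayed inequality holds for $K'$. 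By the first paragraph this proves \eqref{final}.

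The main obstacle is recognizing that the naive per-row estimate alone is not enough — Cauchy--Schwarz runs in the wrong direction — so the decisive move is to spend the $\text{SO}(3)$ freedom on the left of $K$ to align the two coordinate vectors, which is exactly what Lemma~\ref{strano} accomplishes. A secondary point requiring care is the bookkeeping verifying that $K\mapsto OK$ sends $\big(|M^Tk_i|\big)_i$ to the row-norms of $OKM$ and $\big(|k_i^Tc|\big)_i$ to $[O(Kc)]$, so that the conclusion of Lemma~\ref{strano} matches the quantities appearing in the Cauchy--Schwarz step.
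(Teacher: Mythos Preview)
Your argument is correct and follows the same route as the paper's proof: derive the row-wise bound $|M^Tk|+|k^Tc|\le|k|$ from positivity, square and sum over rows to obtain $\|KM\|_2^2+|Kc|^2+2\sum_i|M^Tk_i|\,|k_i^Tc|\le\|K\|_2^2$, and then invoke Lemma~\ref{strano} to rotate $K$ so that the two nonnegative coordinate vectors become proportional and Cauchy--Schwarz is saturated. The only cosmetic difference is organizational: the paper carries the free rotation $O$ through the computation and specializes at the end, whereas you first record the $K\mapsto OK$ invariance and then work with a fixed $K'=OK$; the content is identical.
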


\begin{proof}
Fistly, let us show that
\begin{equation} |n^T c|\,+\, |n^T M|\, \leq\, |n|\quad \forall\ n\in\mathds{R}^3 \label{final 1}\end{equation}
if $(M,c)$ is positive. In order to prove~\eqref{final 1}, observe that we can restrict our analysis to the case $n^T c \geq 0$, up to the exchange $n\leftrightarrow -n$. Now, the positivity condition for $(M,c)$ implies that
\begin{multline*} |\vec{r}|\leq 1\quad \Rightarrow\quad \mathds{1}+\vec{r}\cdot\vec{\sigma}\,\geq\, 0\quad\Rightarrow\\
\Rightarrow\quad \mathds{1}+(M\vec{r}+\vec{c})\cdot\vec{\sigma}\, =\, \phi\,(\mathds{1}+\vec{r}\cdot\vec{\sigma})\,\geq\, 0\quad \Rightarrow\\
\Rightarrow\quad |M\vec{r}+\vec{c}|\,\leq\, 1\ . \end{multline*}
As a consequence, $n^T(M r+c)\,\leq\, |n|$ for all $n\in\mathds{R}^3$. Assuming $n^T c\geq 0$ and taking $r$ as the unit vector parallel to $n^T M$, we obtain exactly~\eqref{final 1}.

Now, consider a generic special orthogonal matrix \mbox{$O\in SO(3)$}, and apply~\eqref{final 1} with $n$ equal to the $i$th row of the matrix $OK$, denoted by $(OK)_i$. We have
\begin{equation*} |(OK)_i\, c|\, +\, |(OK)_i\, M|\, \leq\, |(OK)_i|\ , \end{equation*}
that is,
\begin{equation} |O_i\, Kc|\, +\, |(OKM)_i|\, \leq\, |(OK)_i|\ . \label{final 2}\end{equation}
Here $N_i$ denotes the $i$th row of a matrix $N$, as usual. Squaring and adding~\eqref{final 2} for $i=1,2,3$, one obtains
\begin{equation*} |Kc|^2\, +\, \| KM \|_2^2\, +\, 2\, \sum_{i=1}^3 |(OKc)_i|\, |(OKM)_i|\ \leq\ \|K\|_2^2 \end{equation*}

We will show that the sum in the expression above can be reduced to the product $|Kc|\, \|KM\|_2$. In fact, we use Lemma~\ref{strano} to choose an orthogonal matrix $O$ such that
\begin{equation*} \left[ OKc \right] \quad \text{and}\quad \left(\, |(OKM)_1|\, ,\ |(OKM)_2|\, ,\ |(OKM)_3|\, \right) \end{equation*}
are linearly dependent vectors. In that case, the equality sign holds in the Cauchy--Schwartz inequality for their scalar product, yielding exactly
\begin{gather*} \sum_{i=1}^3 |(OKc)_i|\, |(OKM)_i|\ \equiv\\
\equiv\ \left(\, \sum_{i=1}^3 (OKc)_i^2\, \right)^{1/2}\left(\, \sum_{i=1}^3 |(OKM)_i|^2\, \right)^{1/2}\ \equiv\\
\equiv\ |Kc|\, \|KM\|_2\ . \end{gather*} 
\end{proof}

Finally, we can state the main result of this subsection: \\

\begin{thm}[Proof of Conjecture~\ref{UFO q} for Unital Qubit Channels with $\mathcal{N}_U = 2$] \label{hard proof}
Let $\phi=(M,0)\in\mathbf{CPt}_2$ be an unital qubit channel such that $\mathcal{N}_U(\phi)=2$. Then also $\mathcal{N}(\phi)=2$. Formally,
\begin{equation} \forall\ \phi\in\mathbf{CPtu}_2\ ,\quad \mathcal{N}_U(\phi)\,=\,2\ \, \Longleftrightarrow\ \, \mathcal{N}(\phi)=\,2 \label{hard proof eq} \end{equation}
\end{thm}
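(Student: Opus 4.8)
\emph{Proof plan.} The plan is to reduce the statement to the sufficient separability criterion of Proposition~\ref{TN cr}, after rewriting the filtered channel in Bloch form.

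First I would dispose of the easy direction in \eqref{hard proof eq}: if $\mathcal{N}(\phi)=2$, then \eqref{elem ineq} gives $\mathcal{N}_U(\phi)\leq 2$, while $\mathcal{N}_U(\phi)=1$ would mean $\phi\in\mathbf{EBt}_2$ and hence $\mathcal{N}(\phi)=1$, a contradiction; so $\mathcal{N}_U(\phi)=2$. For the main implication I would start from Theorem~\ref{NU=mU unital q}, which turns the hypothesis $\mathcal{N}_U(\phi)=2$ into $\|M\|_1>1\geq\|M\|_2$; the only fact I will actually use is $\|M\|_2\leq 1$. By \eqref{elem ineq} we already have $\mathcal{N}(\phi)\geq\mathcal{N}_U(\phi)=2$, so it remains to prove $\mathcal{N}(\phi)\leq 2$, i.e.\ that $\phi\psi\phi\in\mathbf{EBt}_2$ for every filter $\psi=(N,c)\in\mathbf{CPt}_2$. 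Using the qubit composition rule $(M_1,c_1)(M_2,c_2)=(M_1M_2,\,M_1c_2+c_1)$ together with $c=0$ for $\phi$, one finds $\phi\psi\phi=(MNM,\,Mc)$; hence, by Proposition~\ref{TN cr}, it suffices to establish
\begin{equation*} \|MNM\|_1 + |Mc|\ \leq\ 1\qquad \forall\,(N,c)\in\mathbf{CPt}_2\,. \end{equation*}

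The core of the argument is this estimate, which I would obtain by chaining two inequalities. Grouping the product as $(MN)M$ and applying H\"older \eqref{Hoelder} with $p=1$, $r=s=2$ gives $\|MNM\|_1\leq\|MN\|_2\,\|M\|_2$. Since $\mathbf{CPt}_2\subseteq\mathbf{Pt}_2$, Lemma~\ref{final} applied to the map $(N,c)$ with $K=M$ yields $|Mc|+\|MN\|_2\leq\|M\|_2$, so that $\|MN\|_2\leq\|M\|_2-|Mc|$ and in particular $|Mc|\leq\|M\|_2$. Combining these,
\begin{align*}
\|MNM\|_1+|Mc| &\leq (\|M\|_2-|Mc|)\,\|M\|_2+|Mc| \\
&= \|M\|_2^2 + |Mc|\,(1-\|M\|_2) \\
&\leq \|M\|_2^2 + \|M\|_2\,(1-\|M\|_2)\ =\ \|M\|_2\ \leq\ 1,
\end{align*}
where the last line uses $|Mc|\leq\|M\|_2$ together with $1-\|M\|_2\geq 0$, and finally $\|M\|_2\leq 1$.

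I expect the only genuine obstacle to be the bookkeeping in this chain: one must split the Schatten-$1$ norm precisely as $\|(MN)M\|_1$ rather than $\|M(NM)\|_1$, so that Lemma~\ref{final} — which controls $\|KM\|_2$ for the channel $(N,c)$ with $K=M$ — can be used to absorb the affine contribution $|Mc|$; with any other grouping one is left with a residual term that cannot obviously be bounded by $1$. Once this choice is made the inequalities close on their own, using nothing beyond $\|M\|_2\leq 1$; as a byproduct, the same computation shows $\mathcal{N}(\phi)=\mathcal{N}_U(\phi)$ whenever $\mathcal{N}_U(\phi)\leq 2$ for a unital qubit channel.
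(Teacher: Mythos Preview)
Your proof is correct and follows essentially the same route as the paper's: reduce to Proposition~\ref{TN cr}, bound $\|MNM\|_1$ via H\"older with $r=s=2$, and then invoke Lemma~\ref{final} with $K=M$ on the filter $(N,c)$ to control $\|MN\|_2+|Mc|$. The paper works in the canonical form $L$ rather than with $M$, and closes the chain in one line by using $\|L\|_2\leq 1$ immediately after H\"older (so $\|LNL\|_1+|Lc|\leq\|LN\|_2+|Lc|\leq\|L\|_2\leq 1$), whereas your algebraic detour through $\|M\|_2^2+|Mc|(1-\|M\|_2)$ is correct but unnecessary.
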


\begin{proof}
Denote by $L$ the canonical form of $M$ (see~\eqref{q canonical form} and~\eqref{k}). Then, the fact that $\mathcal{N}_U(\phi)=2$ can be translated, by~\eqref{mU=NU unital q}, into the inequality
\begin{equation} \|L\|_2\,\leq\, 1\ . \label{hard proof eq1}\end{equation}
Take a generic filter $\psi=(N,c)$. In order to prove that $\mathcal{N}(\phi)=2$, we want to show that \mbox{$\phi\psi\phi=L(N,c)L=(LNL,Lc)\in\mathbf{EBt}_2$}. We will reach such a conclusion by applying~\eqref{TN cr}; indeed, we will show that
\begin{equation} \|LNL\|_1+|Lc|\, \leq\, 1\ . \label{hard proof eq2} \end{equation}

First of all, thanks to the case $p=1$, $r=s=2$ of~\eqref{Hoelder} and to~\eqref{hard proof eq1}, one has
\begin{equation} \|L N L \|_1\ \leq\ \|L N \|_2\, \|L \|_2\ \leq\ \|L N \|_2\ . \label{hard proof eq3}\end{equation}
Invoking Lemma~\ref{final}, and using again~\eqref{hard proof eq1}, we can see that
\begin{equation} \|LN\|_2\, +\, |Lc|\ \leq\ \|L\|_2\ \leq\ 1\ . \label{hard proof eq4} \end{equation}
Putting together~\eqref{hard proof eq3} and~\eqref{hard proof eq4}, we obtain exactly~\eqref{hard proof eq2}.
\end{proof}

Theorem~\ref{hard proof} strengthens the possibility that Conjecture~\ref{UFO q} could be true, by showing once again that \emph{there is nothing too similar to Example~\ref{cex UFO} in the $d=2$ case}. We stress however  that the problem of deciding the validity of Conjecture~\ref{UFO q} has been left open.

\section{Generalizations. An Example: LOCC Filters for Qubit Depolarizing Channels} \label{SEC:LOCC}

Through this paper, we mainly focused our attention on the simplest possible active action of Alice, i.e. the mere application of a local quantum channel. The main experimental advantage of this operation is that it can be realized in principle by using only natural processes and nor measurements neither communication. A generic $\mathbf{CPt}$ filter can be simply implemented by choosing a suitable coupling with a suitable environment and then discarding that ancillary system. We showed that even with these constraints a rich variety of cases arises.

However, other possible frameworks can be equally interesting. Here we want to discuss briefly the three main possible directions in which our scenario can be generalized.

\begin{itemize}

\item \emph{Type of operations}. In the main text we restrict ourselves to the local operations, performed separately by Alice and Bob in their own laboratories. This restriction can be removed by allowing more general LOCC or even separable operations (see~\cite{ENTAN,BENNETT}). Remind that in the LOCC case we allow Alice and Bob to communicate with a classical device, but to perform only local quantum operations. Instead, a separable operation is more generally a global channel on the bipartite system whose Kraus operators can be chosen as separated tensor products. From the above discussion, in ascending order of generality the three main cases are:
\begin{itemize}
\item local operations;
\item LOCC;
\item separable operations.
\end{itemize}  

\item \emph{Multistage coherence.} The basic scenario involves only a single--stage filtering. This constraint can be overcome if we allow Alice to perform generalized measurements on her subsystem, recording the corresponding outcomes. This information can be used in order to choose a more suitable operation on the next stage. If LOCC operations are allowed, she can even send the outcome to Bob, that can in turn use it to apply a clever filter, maybe in a later step.

Let us give an example of the latter strategy. Consider a measurement (performed on Alice's subsystem) which is described by the operators $\{ M_i \}_i$, and moreover a second set of measurements (again on $A$) labeled by the index $i$ and described by operators $\{ N^{(i)}_j \}_j$. For each pair $i,j$, let $ U_{ij} $ be a unitary matrix acting on $B$. For a generic bipartite input state $\rho$ and a local noisy channel $\phi$, the state
\begin{equation} \sum_{ij} (\phi\otimes I)\, (\mathcal{N}^{(i)}_j\otimes \mathcal{U}_{ij})\, (\phi\otimes I)\, (\mathcal{M}_i\otimes I)\, (\phi\otimes I)\, (\rho) \label{A eq0} \end{equation}
is the average state obtained through a multistage LOCC protocol. Here we employed the notation $\mathcal{M}_i(\cdot)=M_i(\cdot) M_i^\dag$ (and the same for $\mathcal{N}^{(i)}_j$ and $\mathcal{U}_{ij}$).

To summarize, the main distinction we can draw with respect to the internal coherence of the filtering process is between
\begin{itemize}
\item single--stage protocols; and
\item multistage protocols.
\end{itemize}

\item \emph{Definition of success.} When is the protocol successful in saving the entanglement? If only single--stage operations are allowed, there is no ambiguity. Instead, if generalized measurements are taken into account, there are at least three possible definitions.
\begin{itemize}
\item If for all the sequences of outcomes of the measurements the resulting bipartite state is entangled, we say that the entanglement has been saved \emph{deterministically}.
\item If there exists at least a sequence of outcomes that has nonzero probability of being realized and such that the corresponding state is entangled, we say that the entanglement has been saved \emph{probabilistically}.
\item If the average final state (obtained by forgetting the specific sequence of outcomes of the measurements) is entangled, we say that the entanglement has been saved \emph{on the average}.
\end{itemize}
Observe that both the deterministically successful protocols and the protocols which are successful on the average are necessarily also probabilistically successful. In this sense, the probabilistic framework is the most general. Instead, there is no a priori relation between deterministic and on--the--average protocols.

\end{itemize}

What about the initial entangled state Alice and Bob share? We will always suppose that they can optimize over it, choosing the most suitable for the preservation of the entanglement given a specific kind of noise. Observe that the optimization can be always restricted to the pure states, up to convex combinations. Since the noise is always local, if also the filtering operations are local (as in the main text) it is known that the optimal choice is always the maximally entangled state (a quantum channel is entanglement--breaking if and only if it breaks the entanglement of a maximally entangled state, see Subsection~\ref{subsec EB}).

Clearly, a scenario is nothing but a set of allowed protocols together with a definition of success. To each scenario a \emph{generalized entanglement--breaking index} $\mathcal{M}$ can be associated. Given a noise $\phi$, the integer $\mathcal{M}(\phi)$ is the smallest number of iteration of $\phi$ such that there is no filtering protocol in the chosen class which can successfully save the entanglement of any initial state.

The above discussion should convince the reader that a rich variety of (in principle) different situations can arise, depending on what filtering strategies we choose to allow. Clearly, the wider the class of protocols and the broader the definition of success are, the more difficult to compute the corresponding entanglement--breaking index is. In the remaining part of this section, we want to discuss the computation of the most general EB index (multistage separable operations with probabilistic success) of the simplest quantum channel (a depolarizing channel acting on a single qubit). 

We will introduce and discuss the various hypotheses one by one, because it is important to see where they come into play, determining a crucial simplification of the analysis. In the end, we will be able to perform the calculation we described.

Let us consider the most general scenario according to the above discussion. This means that we allow Alice and Bob to perform multistage separable operations on their bipartite system and require only a probabilistic success. In what follows, we will reserve capital Greek letters such as $\Phi$ or $\Psi$ to indicate global operations, while small letters will denote local channels, as usual.

As already discussed, a maximally entangled state does not need to be optimal for the entanglement preservation with respect to a fixed noise operation. For instance, we can not a priori exclude that there exists a local noise $\phi$ such that 
\begin{equation*} (\phi\otimes I)\, \Psi\, (\phi\otimes I)\, (\Ket{\varepsilon}\!\!\Bra{\varepsilon})\ \ \text{is separable $\ \forall$ separable $\Psi$,} \end{equation*}
but which admits a non--maximally entangled state $\Ket{\chi}\!\!\Bra{\chi}$ and a separable filter $\Psi_0$ such that
\begin{equation*} (\phi\otimes I)\, \Psi_0\, (\phi\otimes I)\, (\Ket{\chi}\!\!\Bra{\chi})\ \text{is entangled.} \end{equation*}
The same can happen for the LOCC operations instead of the separable ones. Of course this fact in principle complicates remarkably the analysis, even if we can always restrict the analysis to pure input states. This complication does not occur (in a sense, it is already present) in the case in which we require only probabilistic success, because every pure state can be obtained probabilistically from the maximally entangled state by including a suitable local measurement on Bob's subsystem at the first filtering stage.

Now, let us discuss why the specific choice of the depolarizing noise simplifies the analysis in the other non--probabilistic cases. Luckily enough, 
for the depolarizing noise $\Delta_\lambda$ (see~\eqref{D Ch})
the situation is still treatable, since cases similar to the one detailed above can be excluded. Indeed, for these maps starting with a maximally entangled state is always optimal.

\begin{thm} \label{D maxent opt} $\\$
Let $\rho$ be a bipartite state, and $-\frac{1}{d^2-1}\leq\lambda\leq 1$. Then there exists a LOCC operation $\Psi$ such that
\begin{equation} (\Delta_\lambda\otimes I)\, (\rho)\, =\, \left( \Psi\,(\Delta_\lambda\otimes I)\right)\, (\Ket{\varepsilon}\!\!\Bra{\varepsilon})\ . \label{D maxent opt eq} \end{equation}
\end{thm}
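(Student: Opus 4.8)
The plan is to exploit the defining symmetry of the depolarizing channel, namely that $\Delta_\lambda$ commutes with every unitary conjugation (equation~\eqref{D commute U}), together with the fact that any bipartite pure state can be obtained from the maximally entangled state $\Ket{\varepsilon}$ by a \emph{local} filtering operation on one side. The key observation is the well-known identity $(\mathds{1}\otimes A)\Ket{\varepsilon} = (A^T\otimes\mathds{1})\Ket{\varepsilon}$, which lets one ``move'' an operator from Bob's side to Alice's side through $\Ket{\varepsilon}$. First I would reduce to pure states: writing $\rho=\sum_k p_k \Ket{\chi_k}\!\!\Bra{\chi_k}$, if for each $k$ we can produce $(\Delta_\lambda\otimes I)(\Ket{\chi_k}\!\!\Bra{\chi_k})$ from $\Ket{\varepsilon}$ by a LOCC (indeed local) operation, then we can produce the mixture by first having Bob classically sample $k$ with probability $p_k$ and announce it (this is the only place classical communication is used), so the convex-combination step is routine.

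Next I would handle a single pure target $\Ket{\chi}$. By the Schmidt decomposition, $\Ket{\chi} = (\mathds{1}\otimes C)\Ket{\varepsilon'}$ for a suitable (generally non-maximally entangled, and possibly lower-rank) state, but more directly one can write $\Ket{\chi}\!\!\Bra{\chi} = (\mathds{1}_A\otimes K)\,\Ket{\varepsilon}\!\!\Bra{\varepsilon}\,(\mathds{1}_A\otimes K^\dagger)$ up to normalization for an appropriate operator $K$ acting on Bob's space only (choosing $K$ so that $K K^\dagger$ matches the reduced state of $\Ket{\chi}$ on $B$; the Schmidt vectors on the $A$ side are then automatically fixed by the structure of $\Ket{\varepsilon}$, and any residual unitary on $A$ can be absorbed as below). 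Then
\[
(\Delta_\lambda\otimes I)(\Ket{\chi}\!\!\Bra{\chi}) \;\propto\; (\Delta_\lambda\otimes I)\big[(\mathds{1}_A\otimes K)\Ket{\varepsilon}\!\!\Bra{\varepsilon}(\mathds{1}_A\otimes K^\dagger)\big] \;=\; (\mathds{1}_A\otimes K)\,(\Delta_\lambda\otimes I)(\Ket{\varepsilon}\!\!\Bra{\varepsilon})\,(\mathds{1}_A\otimes K^\dagger),
\]
where the second equality is exactly the statement that $\Delta_\lambda$ acts only on system $A$ and hence commutes with the operator $\mathds{1}_A\otimes K$ on system $B$. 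Since $\mathds{1}_A\otimes K$ (followed by the obvious subnormalized-to-normalized bookkeeping, realized as one branch of a quantum instrument on Bob's side) is a local operation on Bob's lab, this already gives~\eqref{D maxent opt eq} for $\Psi$ a one-sided local channel — in particular a LOCC operation — in the pure case. One point to be careful about: the Schmidt bases of $\Ket{\chi}$ need not be the computational basis, so one must first conjugate $\Ket{\varepsilon}$ by $\mathcal{U}\otimes\overline{\mathcal{U}}$ to rotate into the right basis; this is where $\Delta_\lambda\mathcal{U}\equiv\mathcal{U}\Delta_\lambda$ from~\eqref{D commute U} is essential, so that the rotation on $A$ can be pushed past $\Delta_\lambda$ and absorbed into $\Psi$.

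Assembling the two pieces: given a general $\rho$, Bob samples the Schmidt ensemble index $k$, applies the corresponding local rotation-plus-filter $\mathds{1}_A\otimes(U_k K_k)$ conditioned on $k$ (and a compensating local unitary on $A$ conditioned on $k$, communicated classically), and discards $k$; the resulting average state on the output is $(\Delta_\lambda\otimes I)(\rho)$, and the whole procedure is a valid LOCC map $\Psi$. The main obstacle I anticipate is purely bookkeeping: making precise the claim that every pure bipartite state arises as $(\mathds{1}_A\otimes K)$ applied to $\Ket{\varepsilon}$ up to a local unitary on $A$ and a normalization — i.e. verifying one can always choose $K$ with $\Tr[KK^\dagger]$ finite and turn the subnormalized branch into a trace-preserving instrument — and checking that the resulting collection of conditional operations genuinely forms a LOCC protocol rather than something more general. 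None of this is deep, but it needs to be stated carefully; the conceptual content is entirely the commutation~\eqref{D commute U} plus the ``ricochet'' identity for $\Ket{\varepsilon}$.
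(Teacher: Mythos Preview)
Your conceptual core is right --- commutation of $\Delta_\lambda$ with unitaries on $A$, plus an LOCC transformation from $\Ket{\varepsilon}$ to $\Ket{\chi}$ --- but there is a genuine gap in the pure-state step. The map $X\mapsto(\mathds{1}_A\otimes K)\,X\,(\mathds{1}_A\otimes K^\dagger)$ is \emph{not} a channel: it is not trace-preserving, and the ``subnormalized-to-normalized bookkeeping'' cannot be absorbed into a CPTP map (renormalizing by $\Tr[KK^\dagger]/d$ is not linear). If you complete $K$ to a full instrument on Bob's side alone, the other branches do not produce $\Ket{\chi}$, so the resulting trace-preserving local channel on $B$ does \emph{not} send $\Ket{\varepsilon}\!\!\Bra{\varepsilon}$ to $\Ket{\chi}\!\!\Bra{\chi}$ and hence does not yield~\eqref{D maxent opt eq} with equality. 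Your ``compensating local unitary on $A$'' is attached to the ensemble index $k$, not to a measurement outcome, so it does not repair this.

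The missing ingredient is Nielsen's theorem: because $\Ket{\varepsilon}$ is maximally entangled, there is a \emph{deterministic} LOCC protocol $\Psi$ with $\Psi(\Ket{\varepsilon}\!\!\Bra{\varepsilon})=\Ket{\chi}\!\!\Bra{\chi}$, realizable as a measurement $\{M_i\}$ on $B$ followed by a conditional unitary $U_i$ on $A$. This is exactly what the paper invokes. Once $\Psi$ has this specific form, your commutation argument actually works perfectly and arguably more transparently than the paper's verification: each Kraus term $(\mathcal{U}_i\otimes\mathcal{M}_i)$ commutes with $\Delta_\lambda\otimes I$ (the $A$-side unitary by~\eqref{D commute U}, the $B$-side map trivially), so $\Psi\,(\Delta_\lambda\otimes I)=(\Delta_\lambda\otimes I)\,\Psi$ as superoperators, and evaluating on $\Ket{\varepsilon}\!\!\Bra{\varepsilon}$ gives~\eqref{D maxent opt eq} directly. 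The paper instead checks by hand that $\Psi(\mathds{1}/d^2)=\tfrac{\mathds{1}}{d}\otimes\text{Tr}_1\Ket{\chi}\!\!\Bra{\chi}$ from the explicit Kraus form; your commutation observation subsumes this. So the route is essentially the same, but you need Nielsen's deterministic protocol, not a single probabilistic filter, to make $\Psi$ a bona fide LOCC channel.
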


\begin{proof}
As above, up to convex combinations we can suppose that $\rho=\Ket{\chi}\!\!\Bra{\chi}$ is a pure state. Then, we must find a LOCC operation $\Psi$ such that
\begin{multline*} \lambda\,\Ket{\chi}\!\!\Bra{\chi}\, +\, (1-\lambda)\ \frac{\mathds{1}}{d}\otimes \text{Tr}_1 \Ket{\chi}\!\!\Bra{\chi}\ =\\
=\ (\Delta_\lambda\otimes I) (\Ket{\chi}\!\!\Bra{\chi})\ =\ \Psi\,(\Delta_\lambda\otimes I)\, (\Ket{\varepsilon}\!\!\Bra{\varepsilon})\ =\\
=\ \lambda\ \Psi(\Ket{\varepsilon}\!\!\Bra{\varepsilon})\ +\ (1-\lambda)\ \Psi\left(\frac{\mathds{1}}{d^2}\right)\ , \end{multline*}
which can be fulfilled for instance by having
\begin{equation} \Psi(\Ket{\varepsilon}\!\!\Bra{\varepsilon})\, =\, \Ket{\chi}\!\!\Bra{\chi}\ \ \text{and}\ \ \Psi\left(\frac{\mathds{1}}{d^2}\right)\, =\, \frac{\mathds{1}}{d}\otimes \text{Tr}_1 \Ket{\chi}\!\!\Bra{\chi}\, . \label{D maxent opt request} \end{equation}


In Ref.~\cite{NielsenLOCC} it is proved that there exists a LOCC operation transforming $\Ket{\alpha}$ to $\Ket{\beta}$ if and only if the spectrum of $\text{Tr}_1 \Ket{\alpha}\!\!\Bra{\alpha}$ majorizes that of $\text{Tr}_1 \Ket{\beta}\!\!\Bra{\beta}$. With the notation of~\eqref{eq majorization}, we write this condition as
\begin{equation} \text{Tr}_1 \Ket{\beta}\!\!\Bra{\beta}\ \prec\ \text{Tr}_1 \Ket{\alpha}\!\!\Bra{\alpha}\ . \label{D maxent opt maj} \end{equation}
Naturally, this means that the maximally entangled state can be transformed to whatever pure state. As a consequence, we can find a LOCC operation $\Psi$ which satisfies the first condition of~\eqref{D maxent opt request}. Moreover such a LOCC protocol can be composed of only two steps: firstly, a measurement on Bob's subsystem; and secondly, an unitary transformation on Alice's subsystem. As a consequence, the whole trasformation can be written as
\begin{equation} \Psi(\cdot)\, =\, \sum_i\, U_i\otimes M_i\,(\cdot)\, U_i^\dag\otimes M_i^\dag\ , \label{D maxent opt LOCC} \end{equation}
where the $U_i$ are unitary matrices, and $\sum_i M_i^\dag M_i\, =\, \mathds{1}$. By virtue of~\eqref{D maxent opt LOCC}, one easily obtains
\begin{equation*} \text{Tr}_1 \Ket{\chi}\!\!\Bra{\chi}\, =\, \text{Tr}_1 \Psi(\Ket{\varepsilon}\!\!\Bra{\varepsilon})\, =\, \frac{1}{d}\, \sum_i\, M_i M_i^\dag\ . \end{equation*}
But then
\begin{equation*} \Psi\left(\frac{\mathds{1}}{d^2}\right)\, =\, \frac{\mathds{1}}{d}\otimes\,\frac{1}{d}\, \sum_i M_i M_i^\dag\, =\, \frac{\mathds{1}}{d}\otimes \text{Tr}_1 \Ket{\chi}\!\!\Bra{\chi}\ , \end{equation*}
and so also the second condition of~\eqref{D maxent opt request} is met.
\end{proof}

Theorem~\ref{D maxent opt} shows that using a maximally entangled state is always optimal for a depolarizing noise. Indeed, suppose for instance that
\begin{equation*} (\Delta_\lambda\otimes I)\, \Phi\, (\Delta_\lambda\otimes I)\, (\Ket{\varepsilon}\!\!\Bra{\varepsilon})\ \ \text{is separable $\ \forall\ \Phi\in\mathbf{LOCC}$.} \end{equation*}
Take a generic input state $\Ket{\chi}\!\!\Bra{\chi}$; then for all LOCC filters $\Phi$ we find that
\begin{equation*} (\Delta_\lambda\otimes I)\, \Phi\, (\Delta_\lambda\otimes I)\, (\Ket{\chi}\!\!\Bra{\chi})\ =\ (\Delta_\lambda\otimes I)\, \Phi\Psi\, (\Delta_\lambda\otimes I)\, (\Ket{\varepsilon}\!\!\Bra{\varepsilon}) \end{equation*}
is a separable state (this is a trivial consequence of the fact that $\Phi\Psi$ is still LOCC). A similar reasoning of course holds for the separable class of filters.

Now, let us examine in greater detail the single qubit case $d=2$. The importance of this restriction will be clear soon. The main result of this section is the proof that \emph{even multistage separable filtering with probabilistic success is completely useless when a depolarizing noise is acting locally on a single qubit}. This means that the corresponding generalized EB index, indicated with $\mathcal{M}_{SEP}^{ms,\, pr}$, takes the same value as the direct $n$--index:

\begin{equation} \mathcal{M}_{SEP}^{ms,\, pr}(\Delta_\lambda)\ =\ \Big\lceil\ \frac{\log 3}{\,\log\,\frac{1}{\lambda}\,}\ \Big\rceil \label{M D} \end{equation}

Interestingly enough,~\eqref{M D} is a significant improvement of~\eqref{N D}, where only local, single--stage filtering is considered. Of course, we left open the problem for higher dimension, and it could be of some interest finding either an example of an improvement occurring when more general filtering strategies are allowed, or on the contrary a general proof that $\mathcal{M}_{SEP}^{ms,\,pr}(\Delta_\lambda)=\,n\,(\Delta_\lambda)$.

Firstly, we want to fix some notation. Following~\cite{EA1,EA2}, we will indicate with $\mathbf{EA}$ ($\mathbf{EAt}$) the convex set of \emph{entanglement--annihilating} (and trace--preserving, respectively) maps acting on a bipartite system, that is, the set of maps that always produce a separable output regardless of what input state they are acting on. Since in what follows our bipartite system will be made of two qubits, it will be not necessary to specify the dimension with an appropriate subscript. Another useful convex set of channels acting on a bipartite system is the already discussed class of separable (and trace--preserving) maps, denoted by $\mathbf{S}$ ($\mathbf{St}$, respectively). By definition, these channels always preserve the separability of states. The LOCC protocols (which again constitute a convex set) are special examples of separable channels; formally, $\mathbf{LOCC}\subset\mathbf{St}$ (and it is known that the inclusion is strict). Some basic properties are summarized as follows.
\begin{gather} \phi\in\mathbf{EBt}\quad\Rightarrow\quad\phi\otimes I\in\mathbf{EAt} \label{A eq1} \\
\Phi\in\mathbf{S},\ \Psi\in\mathbf{EA}\quad \Rightarrow\quad \Phi\Psi\in\mathbf{EA} \label{A eq2} \\
\Phi\in\mathbf{P},\ \Psi\in\mathbf{EA}\quad \Rightarrow\quad \Psi\Phi\in\mathbf{EA} \label{A eq3}
\end{gather}
In~\eqref{A eq3}, the symbol $\mathbf{P}$ represents the set of positive maps acting on a two--qubit system.

Next, let us introduce some mathematical tool in order to take advantage of our restriction to the qubit case. In~\cite{HorodeckiDep} the reduction map
\begin{equation} P\, =\, \mathds{1}\otimes\text{Tr}_1 \, -\, I \label{R ch} \end{equation}
is introduced and studied. Let us recap some of the most important results.
\begin{itemize}
\item $P$ is trace--preserving, thanks to the restriction to the $d=2$ case.
\item Although \emph{not} positive in general (on the global bipartite system), $P$ produces a well--behaved separable state when acting on a separable state. Formally, we write
\begin{equation} \rho\in\mathcal{S}\quad\Rightarrow\quad 0\leq P(\rho)\in\mathcal{S}\ , \label{A eq4} \end{equation}
$\mathcal{S}$ being the set of separable states. As a consequence, we obtain
\begin{equation} \Psi\in\mathbf{EAt}\quad\Rightarrow\quad P\Psi\in\mathbf{EAt}\ . \label{A eq5} \end{equation}

\item Since the equality $P\,=\,TV_1\otimes I$ holds (see~\eqref{V Ch}), it follows from~\eqref{V comp} that $P$ commutes with the depolarizing channels. Moreover,
\begin{equation} P^2 = I\, . \label{Psquared} \end{equation}
\end{itemize}

Now, we are ready to state the main theorem of this appendix. Although it could be regarded as purely technical, this tool will turn out to be fundamental in order to prove our claim.

\begin{thm} \label{LOCC D} $\\$
Let $\Psi_1,\ldots,\Psi_{n-1}$ be linear maps acting on a two--qubit system, and take $0\leq\lambda_1,\ldots,\lambda_n\leq 1$. Then the map
\begin{equation} (\Delta_{\lambda_1}\otimes I)\,\Psi_1\,(\Delta_{\lambda_2}\otimes I)\ \ldots\ (\Delta_{\lambda_{n-1}}\otimes I)\,\Psi_{n-1}\,(\Delta_{\lambda_n}\otimes I) \label{LOCC D eq1} \end{equation}
can be written as a convex combination of terms of the form
\begin{equation} \tilde{\Psi}_1\ldots\tilde{\Psi}_i\,(\Delta_{\lambda_1\ldots\lambda_n}\otimes I)\,\Psi_{i+1}\ldots\Psi_{n-1}\ , \label{LOCC D eq2} \end{equation}
where $0\leq i\leq n-1$ (the two extreme values corresponding to the degenerate terms \mbox{$(\Delta_{\lambda_1\ldots\lambda_n}\otimes I)\,\Psi_1\ldots\Psi_{n-1}$} and \mbox{$\tilde{\Psi}_1\ldots\tilde{\Psi}_{n-1}\,(\Delta_{\lambda_1\ldots\lambda_n}\otimes I)$}), and the generic symbol $\tilde{\Psi}_j$ represents either the map $\Psi_j$ or the map $P\,\Psi_j P$.
\end{thm}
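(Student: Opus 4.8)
The plan is to turn the statement into a purely algebraic identity about superoperators, by re-expressing the local depolarizing channel in terms of the reduction map $P$ of~\eqref{R ch}. First I would observe that, for $d=2$, one has $\tfrac{\mathds{1}}{2}\text{Tr}\otimes I=\tfrac12(\mathds{1}\otimes\text{Tr}_1)=\tfrac12(I+P)$ by~\eqref{R ch}, so that~\eqref{D Ch} yields
\begin{equation*}
\Delta_\lambda\otimes I\ =\ \lambda\, I\ +\ (1-\lambda)\,\tfrac{I+P}{2}\ =\ \tfrac{1+\lambda}{2}\, I\ +\ \tfrac{1-\lambda}{2}\, P\ .
\end{equation*}
Hence each factor $\Delta_{\lambda_k}\otimes I=a_kI+b_kP$ with $a_k\equiv\tfrac{1+\lambda_k}{2}\ge0$, $b_k\equiv\tfrac{1-\lambda_k}{2}\ge0$ and $a_k+b_k=1$; together with $P^2=I$ from~\eqref{Psquared}, this is the only property of the depolarizing noise that enters, the maps $\Psi_j$ being used only through their linearity.

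The heart of the argument is a two--factor lemma: for every linear map $W$ on two qubits and all $0\le\mu,\nu\le1$,
\begin{equation*}
(\Delta_\mu\otimes I)\,W\,(\Delta_\nu\otimes I)\ =\ \alpha\,(\Delta_{\mu\nu}\otimes I)\,W\ +\ \beta\, W\,(\Delta_{\mu\nu}\otimes I)\ +\ \gamma\,(PWP)\,(\Delta_{\mu\nu}\otimes I)
\end{equation*}
for suitable $\alpha,\beta,\gamma\ge0$ with $\alpha+\beta+\gamma=1$. To prove it I would expand both sides over the four ``monomials'' $W,\,WP,\,PW,\,PWP$, using $P^2=I$ to collapse $(PWP)(\Delta_{\mu\nu}\otimes I)$, and match coefficients; this forces $\beta=a_\mu b_\nu/b_{\mu\nu}$, $\gamma=b_\mu b_\nu/a_{\mu\nu}$, $\alpha=a_\mu a_\nu/a_{\mu\nu}-\beta$. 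The elementary identities $a_\mu a_\nu+b_\mu b_\nu=a_{\mu\nu}$ and $a_\mu b_\nu+b_\mu a_\nu=b_{\mu\nu}$ make the remaining coefficient equations consistent and give $\alpha+\beta+\gamma=1$, while $a_\nu b_{\mu\nu}-b_\nu a_{\mu\nu}=\tfrac12\nu(1-\mu)\ge0$ gives $\alpha\ge0$. The degenerate case $\mu=\nu=1$, where $\Delta_{\mu\nu}\otimes I=I$, is disposed of by taking $\alpha=1$.

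With the lemma in hand the theorem follows by induction on $n$, the case $n=1$ being trivial. For the step I would write~\eqref{LOCC D eq1} as $X_{n-1}\,\Psi_{n-1}\,(\Delta_{\lambda_n}\otimes I)$, where $X_{n-1}=(\Delta_{\lambda_1}\otimes I)\Psi_1\ldots\Psi_{n-2}(\Delta_{\lambda_{n-1}}\otimes I)$, and apply the induction hypothesis to $X_{n-1}$ (with $\mu\equiv\lambda_1\ldots\lambda_{n-1}$). Each resulting summand, multiplied on the right by $\Psi_{n-1}(\Delta_{\lambda_n}\otimes I)$, has the form $\tilde\Psi_1\ldots\tilde\Psi_i\,(\Delta_\mu\otimes I)\,W_i\,(\Delta_{\lambda_n}\otimes I)$ with $W_i\equiv\Psi_{i+1}\ldots\Psi_{n-1}$; applying the two--factor lemma to the inner three--fold product replaces $\Delta_\mu$ by $\Delta_{\lambda_1\ldots\lambda_n}$ and produces three terms of the form~\eqref{LOCC D eq2}: one at position $i$ (from $(\Delta_{\mu\lambda_n}\otimes I)W_i$), one at position $n-1$ with a plain tail $\Psi_{i+1}\ldots\Psi_{n-1}(\Delta_{\mu\lambda_n}\otimes I)$ (from $W_i(\Delta_{\mu\lambda_n}\otimes I)$), and one at position $n-1$ with a conjugated tail $(P\Psi_{i+1}P)\ldots(P\Psi_{n-1}P)(\Delta_{\mu\lambda_n}\otimes I)$ (from $(PW_iP)(\Delta_{\mu\lambda_n}\otimes I)$, after inserting $P^2=I$ between consecutive factors of $W_i$). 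Since a product of convex weights is again a convex weight, the whole of~\eqref{LOCC D eq1} comes out as a convex combination of terms~\eqref{LOCC D eq2}, with $i$ running over $\{0,\ldots,n-1\}$.

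The only genuine computation is the verification of the two--factor lemma — the coefficient matching and the nonnegativity of $\alpha$; everything else is bookkeeping once one has spotted the representation $\Delta_\lambda\otimes I=\tfrac{1+\lambda}{2}I+\tfrac{1-\lambda}{2}P$. I would also point out \emph{why} this is more delicate than the analogous identity~\eqref{D eq4}: there the filters $\psi_j$ were trace--preserving and $\tfrac{\mathds{1}}{d}\text{Tr}$ annihilated everything to its right, so no conjugated filters were needed, whereas here the $\Psi_j$ are arbitrary linear maps and $\tfrac{\mathds{1}}{2}\text{Tr}\otimes I$ keeps the second tensor factor intact — which is exactly what forces the maps $P\Psi_jP$ into the decomposition.
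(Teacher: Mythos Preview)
Your proposal is correct and follows essentially the same strategy as the paper: a two--factor identity plus induction. Your lemma is exactly the paper's base case~\eqref{LOCC D eq3} --- your $(\alpha,\beta,\gamma)$ coincide with the paper's explicit fractions $\frac{\mu(1-\lambda^2)}{1-\lambda^2\mu^2}$, $\frac{(1+\lambda)(1-\mu)}{2(1-\lambda\mu)}$, $\frac{(1-\lambda)(1-\mu)}{2(1+\lambda\mu)}$ (after relabeling $\lambda\to\mu$, $\mu\to\nu$) --- and your inductive step matches the paper's. The one substantive addition is that you make the origin of the identity transparent via the representation $\Delta_\lambda\otimes I=\tfrac{1+\lambda}{2}I+\tfrac{1-\lambda}{2}P$, whereas the paper simply asserts that the $n=2$ case ``can be solved directly'' and writes down~\eqref{LOCC D eq3}; your coefficient--matching over the basis $\{W,WP,PW,PWP\}$ is a nicer derivation of the same formula.
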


\begin{proof}
Let us prove the theorem by induction. The first nontrivial case $n=2$ can be solved directly.
\begin{multline} (\Delta_\lambda\otimes I)\,\Psi\,(\Delta_\mu\otimes I)\ \ =\ \ \frac{(1+\lambda)(1-\mu)}{2(1-\lambda\mu)}\ \Psi\,(\Delta_{\lambda\mu}\otimes I)\ \ +\\
+\ \ \frac{(1-\lambda)(1-\mu)}{2(1+\lambda\mu)}\ P\,\Psi P\,(\Delta_{\lambda\mu}\otimes I)\ \ +\\
+\ \ \frac{\mu(1-\lambda^2)}{1-\lambda^2 \mu^2}\ (\Delta_{\lambda\mu}\otimes I)\, \Psi\ \ . \label{LOCC D eq3} \end{multline}
Now, suppose that we proved the thesis for $n$, and let us examine the $n+1$ case. By applying the inductive hypothesis, we can decompose the map
\begin{equation*} (\Delta_{\lambda_1}\otimes I)\,\Psi_1\,(\Delta_{\lambda_2}\otimes I)\ \ldots\ (\Delta_{\lambda_n}\otimes I)\,\Psi_n\,(\Delta_{\lambda_{n+1}}\otimes I) \end{equation*}
as a convex combination of terms of the form
\begin{equation} \tilde{\Psi}_1\ldots\tilde{\Psi}_i\,(\Delta_{\lambda_1\ldots\lambda_n}\otimes I)\,\Psi_{i+1}\ldots\Psi_{n-1}\,\Psi_n\,(\Delta_{\lambda_{n+1}}\otimes I) \label{LOCC D eq4} \end{equation}
(for the notation see the above explanation). Now, equation~\eqref{LOCC D eq3} allows us to write
\begin{equation*} (\Delta_{\lambda_1\ldots\lambda_n}\otimes I)\,\Psi_{i+1}\ldots\Psi_{n-1}\,\Psi_n\,(\Delta_{\lambda_{n+1}}\otimes I) \end{equation*}
as a convex combination of the three terms
\begin{gather*} \Psi_{i+1}\ldots\Psi_{n-1}\,\Psi_n\,(\Delta_{\lambda_1\ldots\lambda_n\lambda_{n+1}}\otimes I)\ ,\\
P\,\Psi_{i+1}\ldots\Psi_{n-1}\,\Psi_n P\,(\Delta_{\lambda_1\ldots\lambda_n\lambda_{n+1}}\otimes I)\ ,\\
(\Delta_{\lambda_1\ldots\lambda_n\lambda_{n+1}}\otimes I)\, \Psi_{i+1}\ldots\Psi_{n-1}\,\Psi_n\ . \end{gather*}
Once inserted into~\eqref{LOCC D eq4}, all these terms fit into the general form~\eqref{LOCC D eq2} (taking into account also also~\eqref{Psquared}), and we are done.
\end{proof}

As a corollary of this result, we can easily prove that multistage separable filters can not save the entanglement against a local depolarizing noise, even if only probabilistic success is required. This is the same as to claim that~\eqref{M D} holds.
Suppose that a general, multistage separable filter acts on an input bipartite state $\rho$, producing a specific sequence of outcomes of the measurements. Then, it should be clear that the final state of the system is proportional to
\begin{equation*} (\Delta_{\lambda_1}\otimes I)\,\Psi_1\,(\Delta_{\lambda_2}\otimes I)\ \ldots\ (\Delta_{\lambda_{n-1}}\otimes I)\,\Psi_{n-1}\,(\Delta_{\lambda_n}\otimes I)\ (\rho)\ , \end{equation*}
where the $\Psi_i$ are separable, in general non--trace--preserving maps (see also~\eqref{A eq0}). If we could prove that this (unnormalized) state is always separable, then the thesis~\eqref{M D} would follow. All that is implied by the following result.\\

\begin{cor} \label{LOCC D final} $\\$
Let $\Psi_1,\ldots,\Psi_{n-1}\in\mathbf{S}$ be separable maps acting on a two--qubit bipartite system, and take \mbox{$-\frac{1}{3}\leq\lambda_1,\ldots,\lambda_n\leq 1$} such that $\lambda_1\ldots\lambda_n\leq \frac{1}{3}$. Then the map
\begin{equation} (\Delta_{\lambda_1}\otimes I)\,\Psi_1\,(\Delta_{\lambda_2}\otimes I)\ \ldots\ (\Delta_{\lambda_{n-1}}\otimes I)\,\Psi_{n-1}\,(\Delta_{\lambda_n}\otimes I) \label{LOCC D final eq} \end{equation}
is entanglement--annihilating.
\end{cor}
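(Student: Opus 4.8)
The plan is to combine Theorem~\ref{LOCC D} with the elementary closure properties~\eqref{A eq1}, \eqref{A eq2}, \eqref{A eq3}, \eqref{A eq4}. First I would dispose of the degenerate case in which some $\lambda_j<0$. Then $-\frac13\le\lambda_j<\frac13$, so $\Delta_{\lambda_j}\in\mathbf{EBt}_2$ by~\eqref{D EB}, hence $\Delta_{\lambda_j}\otimes I\in\mathbf{EAt}$ by~\eqref{A eq1}. Writing the map~\eqref{LOCC D final eq} as $\Phi_L\,(\Delta_{\lambda_j}\otimes I)\,\Phi_R$, the block $\Phi_R$ on the right is a composition of completely positive maps, hence positive, so $\Phi_R\in\mathbf P$; and the block $\Phi_L$ on the left is a composition of separable maps (the $\Psi_i\in\mathbf S$ together with the channels $\Delta_{\lambda_i}\otimes I$, whose Kraus operators are, up to normalization, $\sigma_\mu\otimes\mathds 1$ for every $\lambda_i\in[-\frac13,1]$), hence $\Phi_L\in\mathbf S$. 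Then \eqref{A eq3} gives $(\Delta_{\lambda_j}\otimes I)\,\Phi_R\in\mathbf{EA}$ and \eqref{A eq2} gives $\Phi_L\,(\Delta_{\lambda_j}\otimes I)\,\Phi_R\in\mathbf{EA}$. From now on I may therefore assume $0\le\lambda_i\le1$ for all $i$, which is exactly the hypothesis of Theorem~\ref{LOCC D}.

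Applying Theorem~\ref{LOCC D}, the map~\eqref{LOCC D final eq} becomes a convex combination of terms $\tilde\Psi_1\cdots\tilde\Psi_i\,(\Delta_{\lambda_1\cdots\lambda_n}\otimes I)\,\Psi_{i+1}\cdots\Psi_{n-1}$, with each $\tilde\Psi_j$ equal to $\Psi_j$ or to $P\Psi_jP$. Since all $\lambda_i\ge0$ and $\lambda_1\cdots\lambda_n\le\frac13$, we have $0\le\lambda_1\cdots\lambda_n\le\frac13$, so \eqref{D EB} gives $\Delta_{\lambda_1\cdots\lambda_n}\in\mathbf{EBt}_2$ and hence the central factor $E:=\Delta_{\lambda_1\cdots\lambda_n}\otimes I$ is entanglement--annihilating by~\eqref{A eq1}. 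As $\mathbf{EA}$ is convex, it suffices to prove that each such term lies in $\mathbf{EA}$.

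For this I would track the convex cone $\mathcal K$ of positive, separable operators on the two qubits. Fix an arbitrary input state $\rho$. The right block $\Psi_{i+1}\cdots\Psi_{n-1}$ is completely positive, so it sends $\rho$ to a positive operator; $E$, being entanglement--annihilating, sends this to an element of $\mathcal K$ (an EA map carries positive operators into positive separable ones, by homogeneity). Now each $\tilde\Psi_j$ maps $\mathcal K$ into itself: this is immediate when $\tilde\Psi_j=\Psi_j$, since $\Psi_j$ is a separable map; and when $\tilde\Psi_j=P\Psi_jP$ it follows from two applications of~\eqref{A eq4}, which states precisely that $P$ maps separable states, hence the cone $\mathcal K$, into itself. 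Therefore the left block $\tilde\Psi_1\cdots\tilde\Psi_i$ keeps the operator inside $\mathcal K$, so the whole term sends $\rho$ to a separable operator; as $\rho$ was arbitrary, the term is entanglement--annihilating, and so is the convex combination.

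The main obstacle is the treatment of the $P\Psi_jP$ blocks sitting to the left of the entanglement--annihilating factor: $P$ is not positive on the whole bipartite system, so one cannot treat it as an ordinary channel via~\eqref{A eq2}, and one must instead exploit the special feature~\eqref{A eq4} that $P$ nonetheless preserves the separable cone — itself a consequence of the two--qubit restriction, under which $P=TV_1\otimes I$ is trace--preserving and separability--preserving. A secondary point requiring care is that the filters $\Psi_i$ are not assumed trace--preserving, so the argument must be phrased at the level of the cones (positive operators, and $\mathcal K$) and of the classes $\mathbf P$, $\mathbf{EA}$, rather than with normalized states and $\mathbf{EAt}$.
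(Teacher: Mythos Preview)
Your proof is correct and follows essentially the same route as the paper's: dispose of the case $\lambda_j\le 0$ via $\Delta_{\lambda_j}\in\mathbf{EBt}$, then apply Theorem~\ref{LOCC D} and show each term of the convex combination is entanglement--annihilating by sandwiching the EA core $\Delta_{\lambda_1\cdots\lambda_n}\otimes I$ between a positive right block and a left block that preserves separability. The only stylistic difference is that the paper packages the left-hand argument via the abstract closure rules~\eqref{A eq2} and~\eqref{A eq5}, whereas you unwind this into an explicit ``track the separable cone $\mathcal K$'' argument using~\eqref{A eq4} directly; these are the same reasoning, and your more explicit version has the minor advantage of making transparent why the non-positivity of $P$ is harmless and why trace preservation of the $\Psi_i$ is not needed.
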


\begin{proof}
If $\lambda_i\leq 0$ for some $i$, then the channel $\Delta_{\lambda_i}$ is entanglement--breaking and the thesis follows from~\eqref{A eq1}. Otherwise, Theorem~\ref{LOCC D} shows that the map~\eqref{LOCC D final eq} can be written as a convex combination of terms of the form
\begin{equation*} \tilde{\Psi}_1\ldots\tilde{\Psi}_i\,(\Delta_{\lambda_1\ldots\lambda_n}\otimes\, I)\,\Psi_{i+1}\ldots\Psi_{n-1}\ , \end{equation*}
where we use the same notation of~\eqref{LOCC D eq2}. Now, we want to prove that every such a term is indeed entanglement--annihilating. The thesis will follow thanks to the convexity of the $\mathbf{EA}$ set.
Since $\lambda_1\ldots\lambda_n\leq \frac{1}{3}$,~\eqref{D EB} guarantees that $\Delta_{\lambda_1\ldots\lambda_n}\in\mathbf{EBt}$. As a consequence,~\eqref{A eq1} implies that $\Delta_{\lambda_1\ldots\lambda_n}\otimes\, I\in\mathbf{EAt}$. By applying~\eqref{A eq3}, it follows that also
\begin{equation*} (\Delta_{\lambda_1\ldots\lambda_n}\otimes\, I)\,\Psi_{i+1}\ldots\Psi_{n-1}\, \in\, \mathbf{EA}\, . \end{equation*}
Finally, using~\eqref{A eq2} and~\eqref{A eq5} we can conclude that
\begin{equation*} \tilde{\Psi}_1\ldots\tilde{\Psi}_i\,(\Delta_{\lambda_1\ldots\lambda_n}\otimes\, I)\,\Psi_{i+1}\ldots\Psi_{n-1}\ \in\ \mathbf{EA}\ . \end{equation*}
\end{proof}

\section{Conclusions}\label{sec conc}
In this paper we introduced and studied some functionals to characterize the entanglement--breaking behavior of the iterations of a given channel. The scenario in which Alice is allowed to play an active role against the noise, by interposing appropriate local filters between two consecutive applications of the noise, is deeply analyzed. Also the general case in which Alice and Bob can implement a wider class of filtering protocols is presented, and the most general corresponding index is computed for a depolarizing noise acting on a two--qubit system. In the basic case, we examined some examples in which the best strategy can be provably found, and all our indices analytically calculated. Furthermore, a nontrival counterexample (valid in all dimensions $d\geq 3$) is presented, showing that the best filtering strategy is \emph{not} always unitary. We remark that the corresponding question for the qubit case is still open, even if we were able to prove two partial results pointing out that such a counterexample could not exist for two--dimensional systems. Another interesting problem on which we are currently working, is the characterization of those channels exhibiting divergent values of some entanglement--breaking indices.

We thank A. De Pasquale for comments and discussions.

\end{document}